\title{On Existence of Equilibrium Under Social Coalition Structures} %TODO Please add
\titlerunning{On Existence of Equilibrium Under Social Coalition Structures}%optional, please use if title is longer than one line
\author{Bugra Caskurlu}{TOBB University of Economics and Technology, Ankara, Turkey.}{bcaskurlu@etu.edu.tr}{}{}
\author{Ozgun Ekici}{Ozyegin University, Istanbul, Turkey.}{ozgun.ekici@ozyegin.edu.tr}{}{}
\author{Fatih Erdem Kizilkaya}{TOBB University of Economics and Technology, Ankara, Turkey.}{f.kizilkaya@etu.edu.tr}{}{}
\authorrunning{Caskurlu et al.}%TODO mandatory. First: Use abbreviated first/middle names. Second (only in severe cases): Use first author plus 'et al.'
\keywords{Algorithmic Game Theory, Solution Concepts, Existence of Equilibria, Resource Selection Games}%TODO mandatory; please add comma-separated list of keywords
\begin{document}

\maketitle

%TODO mandatory: add short abstract of the document
\begin{abstract}
In a strategic form game a strategy profile is an equilibrium if no viable
coalition of agents (or players) benefits (in the Pareto sense) from jointly
changing their strategies. Weaker or stronger equilibrium notions can be
defined by considering various restrictions on coalition formation. In a Nash
equilibrium, for instance, the assumption is that viable coalitions are
singletons, and in a super strong equilibrium, every coalition is viable.
Restrictions on coalition formation can be justified by communication
limitations, coordination problems or institutional constraints. In this
paper, inspired by social structures in various real-life scenarios, we
introduce certain restrictions on coalition formation, and on their basis we
introduce a number of equilibrium notions. As an application we study our
equilibrium notions in resource selection games (RSGs), and we present a
complete set of existence and non-existence results for general RSGs and their
important special cases.
\end{abstract}

\newpage

\section{Introduction}
\label{sec:Introduction}

In game theory the centerpiece of analysis is the
notion of an equilibrium. In a game in strategic form, an equilibrium is a
strategy profile at which certain types of coalitions of agents
do not have profitable deviations. The strongest notion that can be defined
along this line is a super strong equilibrium: no coalition of agents benefits (in the Pareto sense) from jointly changing their strategies. Note that in a game with $n$ agents there are as many as $2^{n}-1$ (non-empty) possible coalitions if any coalition is deemed viable. However, deeming every coalition viable and disqualifying strategy profiles as non-equilibrium may be misguided. First of all, a super strong equilibrium rarely exists in a game. Therefore, restrictions on coalition formation may be helpful to obtain existence results\footnote{In defining our equilibrium notions we use the weak domination relation: a deviation makes coalition members better off in the Pareto sense. An alternative approach is to define an equilibrium using the strong domination relation: a deviation makes \emph{every} coalition member strictly better off. Even when the strong domination relation is used, an equilibrium rarely exists in a game if every coalition is viable (the so-called strong equilibrium notion). For studies on strong equilibrium, its existence, and some other related work, see \cite{Ref_Aumann1959, Ref_BPW1997, Ref_HL1997, Ref_KLW1997, Ref_KLW1999, Ref_MW1996}.}.

This is the very same idea behind the well-known Nash equilibrium \cite{Ref_N1951} solution concept where only singletons are viable coalitions. In this paper, our goal is to fill the gap between the less restrictive Nash equilibrium notion and the very restrictive super strong equilibrium notion. The restrictions we enforce upon coalitions are not merely mathematical generalizations, but also motivated by many real-life examples.

Coalition formation may be restricted by coordinational, communicational and institutional constraints.

\begin{itemize}

\item[-] \textbf{Coordinational:} A deviation by a coalition requires coalition members to act in unison. However, if coalition members are not familiar with one another, taking coordinated action becomes difficult. Or, everyone may be familiar with one another yet agents may find it more difficult to coordinate as the number of coalition members grows.

\item[-] \textbf{Communicational:} Formation of coalitions may require private communication. For instance, imagine that agents communicate through a network where each agent is located at one of the nodes. If some agent $i$ wants to offer a deviation to another agent $j$, then agent $i$ had better make sure that his offer does not deteriorate any of the agents along the path, since otherwise it will probably not reach agent $j$.

\item[-] \textbf{Institutional:} Even if there does not exist any coordinational or communicational barrier between two agents to form a coalition, there might exists self-imposed institutional constraints. In global affairs, it is not uncommon that a government feels compelled to act in unison with its allies even if doing so comes at a great cost. For instance, it may be forced to uphold trade sanctions on a neighboring country, causing much harm on its economy. Or, a nation may refuse to engage in mutually beneficial relations with another nation due to historical enmities.
\end{itemize}

Note that a full consideration of what restrictions on coalition formation may be reasonable in a specific real-life scenario is beyond the scope of our paper. We rather focus on restrictions that are motivated by natural real-life social structures that may arise in various settings.

On the basis of our restrictions, we define new equilibrium notions and then study how they relate to one another, and when they are guaranteed to exist. Adding social structures to games is actually a growing trend in the recent literature. The following equilibrium notion introduced in an earlier study is related to our study in particular\footnote{For two other related studies see Ashlagi et al. \cite{Ref_AKT2008} and Hoefer et al. \cite{Ref_HPPSV2011}.}:

\begin{itemize}
\item[-] \textbf{Partition Equilibrium:} In a partition equilibrium, it is assumed that the set of viable coalitions is a partition of the set of agents; see Figure \ref{fig:partition}. This notion generalizes the notion of a Nash equilibrium and has been introduced by Feldman and Tennenholtz \cite{Ref_FT2009}.
\end{itemize}

\begin{figure}[h]
\centering
\begin{subfigure}[b]{.5\textwidth}
\centering
\scalebox{.85}{
\begin{tikzpicture}
\draw (-25pt, 0pt) ellipse (20pt and 12pt);
	\draw (-25pt, 0pt) node{1 \quad 2};
\draw (25pt, 0pt) ellipse (20pt and 12pt);
	\draw (25pt, 0pt) node{3 \quad 4};
\draw (80pt, 0pt) ellipse (28pt and 15pt);
	\draw (80pt, 0pt) node{5 \quad 6 \quad 7};
\draw (135pt, 0pt) ellipse (20pt and 12pt);	
	\draw (135pt, 0pt) node{8 \quad 9};	
\end{tikzpicture}}
\caption{\centering A Partition Coalition Structure}
\label{fig:partition}
\end{subfigure}\begin{subfigure}[b]{.5\textwidth}
\centering
\scalebox{.85}{
\begin{tikzpicture}
\draw (0pt, 0 pt) ellipse (50pt and 20pt);
	\draw (-25pt, 0pt) ellipse (20pt and 10pt);
		\draw (-25pt, 0pt) node{1 \quad 2};
	\draw (25pt, 0pt) ellipse (20pt and 10pt);
		\draw (25pt, 0pt) node{3 \quad 4};
		
\draw (120pt, 0pt) ellipse (50pt and 20pt);
\draw (120pt, 0pt) node{5 \quad 6 \quad 7 \quad 8 \quad 9};
	\draw (138pt, 0pt) ellipse (28pt and 11pt);
\end{tikzpicture}}
\caption{\centering A Laminar Coalition Structure}
\label{fig:laminar}
\end{subfigure}
\caption{Examples of Partition and Laminar Coalition Structures}%
\end{figure}
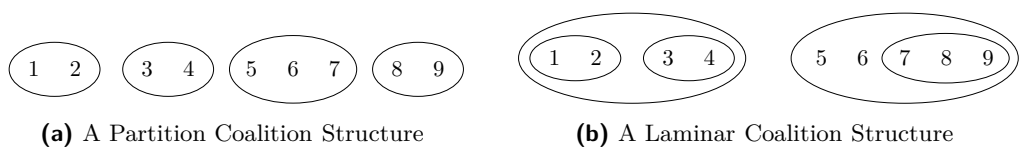

\medskip

Along similar lines we introduce in our paper three new notions of equilibrium, motivated by real-life social structures:

\medskip

\begin{itemize}
\item[-] \textbf{Laminar Equilibrium:} In a laminar equilibrium, it is assumed that the set of viable coalitions exhibits a laminar structure; see Figure \ref{fig:laminar}. This notion is mainly motivated by institutional constraints as it relates to hierarchical communities in real life. For instance, a military is divided into corps, legions, and brigades; a cabinet is divided into ministries, departments, and directorates; a university is divided into faculties and departments; and a company is divided into business units, divisions, and departments.

\item[-] \textbf{Contiguous Equilibrium:} In a contiguous equilibrium, it is assumed that agents are distributed on a line and each viable coalition consists of some agents that are ordered on the line subsequently; see Figure \ref{fig:contiguous}. A contiguous coalition structure may emerge in real life due to coordinational, communicational and institutional constraints as depicted in following scenarios, respectively: $i)$ Residents of a street are most likely to socialize via neighbourhood, $ii)$ When private communication between players are restricted in an environment such as a queue, $iii)$ When agents are positioned on left-right political spectrum, coalitions presumably cannot be formed without intermediaries.

\item[-] \textbf{Centralized Equilibrium:} In a centralized equilibrium, it is assumed that agents are distributed on a plane and each viable coalition corresponds to a circle on the plane such that a coalition member lies at the circle's center and the agents that lie inside the circle are the coalition members; see Figure \ref{fig:centralized}. A centralized coalition structure may emerge in real life due to coordinational, communicational and institutional constraints as depicted in following scenarios, respectively: $i)$ Residents of a neighbourhood are more likely to socialize inside a closed distance, $ii)$ When agents can only communicate within a specific distance due to various reasons, such as wireless coverage, $iii)$ When agents are positioned on a political compass, the radius of a coalition corresponds to the tolerance of its center (possibly the leader) to other political views.
\end{itemize}

\begin{figure}[h]
\centering
\begin{subfigure}[b]{.5\textwidth}
\centering
\scalebox{.85}{
\begin{tikzpicture}
\draw (26pt, 0pt) node{1 \qquad 2 \qquad 3 \qquad 4 \qquad 5 \qquad 6 };	
\draw (0pt, 0 pt) ellipse (58pt and 20pt);
\draw (-14pt, 0pt) ellipse (37pt and 12pt);
\draw (83pt, 0pt) ellipse (24pt and 10pt);
\draw (50pt, 0 pt) ellipse (58pt and 20pt);
\end{tikzpicture}}
\caption{\centering A Contiguous Coalition Structure}
\label{fig:contiguous}
\end{subfigure}\begin{subfigure}[b]{.5\textwidth}
\centering
\scalebox{.85}{
\begin{tikzpicture}
\fill (0pt, 30pt) circle (2pt) node[anchor = south east]{1};
\fill (30pt, 30pt) circle (2pt) node[anchor = south east]{2};
\fill (0pt, 0pt) circle (2pt) node[anchor = north east]{3};
\fill (30pt, 0pt) circle (2pt) node[anchor = north east]{4};
\fill (56pt, 15pt) circle (2pt) node[anchor = east]{5};
\fill (86pt, 15pt) circle (2pt) node[anchor = west]{6};
\fill (112pt, 30pt) circle (2pt) node[anchor = south west]{7};
\fill (142pt, 30pt) circle (2pt) node[anchor = south west]{8};
\fill (112pt, 0pt) circle (2pt) node[anchor = north west]{9};
\fill (142pt, 0pt) circle (2pt) node[anchor = north west]{10};
\draw (0pt, 0pt) circle (30pt);
\draw (30pt, 0pt) circle (30pt);
\draw (0pt, 30pt) circle (30pt);
\draw (30pt, 30pt) circle (30pt);
\draw (56pt, 15pt) circle (30pt);
\draw (86pt, 15pt) circle (30pt);
\draw (112pt, 30pt) circle (30pt);
\draw (142pt, 30pt) circle (30pt);
\draw (112pt, 0pt) circle (30pt);
\draw (142pt, 0pt) circle (30pt);
\end{tikzpicture}}
\caption{\centering A Centralized Coalition Structure}
\label{fig:centralized}
\end{subfigure}
\caption{Examples of Contiguous and Centralized Coalition Structures}%
\end{figure}
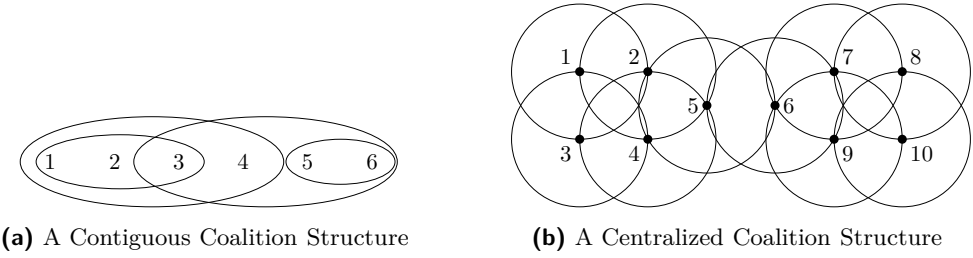

Notice that the number of viable coalitions is $O(n)$ in the case of a partition equilibrium or a laminar equilibrium, and it is $O(n^{2})$ in the case of a contiguous equilibrium or a centralized equilibrium (where $n$ is the number of agents). However, the number of possible coalition structures w.r.t. the number of agents for these notions are beyond the scope of our paper\footnote{Partition problem was famously solved by Ramanujan; however, the number of possible laminar families has been an open problem for years.}. In Theorem 1, we show that each equilibrium notion above generalizes the preceding one.

\bigskip

As an application we study the existence of the above notions of equilibrium in resource selection games (RSGs), for the following reasons:

\begin{itemize}
\item[-] RSGs fall into the class of potential games for which the existence of a Nash equilibrium is guaranteed (see \cite{Ref_MS1996, Ref_R1973}). Since the newly defined solution concepts are generalizations of Nash equilibrium, existence of equilibria w.r.t. them is not guaranteed in classes of games for which Nash equilibrium is not guaranteed to exist.

\item[-] However, super strong equilibrium does not exist even in the simplest special cases of this class of games. Hence, it is not trivial whether existence of equilibria w.r.t. the above solution concepts is guaranteed or not.

\item[-] RSGs are a subclass of congestion games \cite{Ref_R1973} which has immense number of applications \cite{Ref_HTW2006, Ref_HL1997, Ref_KS2014, Ref_M1996}. Simple as they may be, RSGs capture the essence of various games especially in the domain of routing games. In this setting, they are mostly known as parallel-link networks. For recent literature on parallel-link networks, see \cite{esa} and the references therein.

\item[-] Aside from their natural applications in transportation and communication networks RSGs have been also shown to be useful in biology \cite{Ref_M1979, Ref_QM1994}.

\item[-] Not only the immediate previous work \cite{Ref_ACH2013, Ref_FT2009}, but also several other newly defined solution concepts \cite{Ref_HTW2006, Ref_HPPSV2011} were studied for RSGs. So, RSGs are a benchmark to study existence of equilibrium w.r.t. newly defined solution concepts (\cite{Ref_HTW2006} uses the term symmetric load balancing games for RSGs).
\end{itemize}

Our results in RSGs and their relation to the results in the literature are as follows: Feldman and Tennenholtz \cite{Ref_FT2009} showed that a partition equilibrium always exists in RSGs under the following restrictions: (i) if the size of a viable coalition is bounded by $2$; or (ii) if there are only two resources; or (iii) if the resources are identical. Anshelevich et al. \cite{Ref_ACH2013} generalized this result by proving a strategy profile that is both a partition equilibrium and a Nash equilibrium is guaranteed to exist in general RSGs. \textbf{Our findings are as follows:}

\begin{itemize}
\item[-] In Section 3.1 and 3.2, we generalize the results (ii) and (iii) above in \cite{Ref_FT2009} to the notion of a laminar equilibrium. We prove that a laminar equilibrium always exists: $i)$ If there are only two resources (Theorem \ref{teo_two_yesLE}), or $ii)$ If the resources are identical (Corollary \ref{cor:iden_yesLE}). Note that RSGs with two resources is interesting in its own right. For instance, the well-known $PoA = 4/3$ result for selfish routing also holds for parallel-link networks with two links \cite{correa}.

\item[-] In Section 3.1, we show that an analogous generalization of the result in \cite{Ref_ACH2013} is not possible. Via an intricate counterexample, we show that a laminar equilibrium may not exist in general RSGs (Theorem \ref{teo_noLE}). Indeed, our counterexample shows that in general RSGs there may not exist a strategy profile that is Pareto efficient, a partition equilibrium, and a Nash equilibrium (Corollary \ref{cor:NashParetoPartition}). Notice that the main existence result in \cite{Ref_ACH2013} does not survive a minimal extension of their domain of viable set of coalitions, i.e., when the set of all agents is added to the viable set of coalitions.

\item[-] In Section 3.2, we prove that a contiguous equilibrium may not always exist in an RSG with two resources (Theorem \ref{teo_two_noCOE}). We show that, however, a contiguous equilibrium always exists when resources are identical (Theorem \ref{teo_ident_yesCoE}).

\item[-] In Section 3.2, we show that, however, a centralized equilibrium may not exist even in the very special setting in which there are two identical resources (Theorem \ref{teo_twoident_noCEE}). In the two identical resources setting, Feldman and Tennenholtz \cite{Ref_FT2009} showed that a super strong equilibrium may not exist.
\end{itemize}

\textbf{Table \ref{Table_ExResults} below summarizes these findings:}

\begin{table}[h]
\centering
\begin{tabular}
[b]{||c||cccc||}\hline
\multirow{2}{*}{$Solution \ Concepts$} & \multicolumn{4}{c||}{$Resources$%
}\\ \cline{2-5}
& {\small $general$} & {\small $two$} & {\small $identical$} & {\small $two
\ identical$}\\ \hline
$Partition$ & $+ ^{**}$ & $+ ^{*}$ & $+ ^{*}$ & $+ ^{*}$\\ \hline
$Laminar$ & $-$ & $+$ & $+ $ & $+$\\ \hline
$Contiguous$ & $-$ & $-$ & $+$ & $+$\\ \hline
$Centralized$ & $-$ & $-$ & $-$ & $-$\\ \hline
\multicolumn{5}{||c||}{{\footnotesize $^{*}$ due to Feldman and Tennenholtz
\cite{Ref_FT2009} \  \  \  \ $^{**}$ due to Anshelevich et al. \cite{Ref_ACH2013}%
}}\\ \hline
\end{tabular}
\caption{Existence and Non-existence Results}%
\label{Table_ExResults}%
\end{table}

\section{The Equilibrium Notions}
\label{sec:equinot}

This section introduces our equilibrium notions in the context of a strategic form game and then studies how these notions are related.

\medskip

Let $\langle N,S,U\rangle$ be a \emph{strategic form game} where $N$ is a finite set of \emph{agents} (or players), $S:(S_{j})_{j\in N}$ is the \emph{strategy space} and $U: S \rightarrow \mathbb{R}^{|N|}$ is the \emph{payoff function.} Agent $j$'s payoff at strategy profile $s\in S$ is denoted by $U_{j}%
(s)$.\footnote{Throughout, $\subset$ and $\subseteq$ denote the
\textquotedblleft strict subset of\textquotedblright \ and the
\textquotedblleft subset of\textquotedblright \ relations. For a set $X$, $|X|$ denotes the cardinality of $X$. For a number $x$, $|x|$ denotes the absolute value of $x$,\ and $\lfloor x\rfloor$ denotes the greatest integer smaller than $x$.} A \emph{coalition} $c$ is a non-empty subset of agents. Let $\mathcal{P}(N)$
be the power set of $N$. Then the domain of coalitions is $\mathcal{P}%
(N)-\{ \emptyset \}$. Let $\mathcal{P}_{\geq1}(N)$ denote this domain. A \emph{coalition structure} $C$ is a set of viable coalitions; i.e.,
$C\subseteq \mathcal{P}_{ \geq1}(N)$.

\medskip

Let $S_{c}$ denote the restriction of the strategy space for coalition $c$.
Let $s_{c}$ denote the restriction of the strategy profile $s$ for coalition
$c$. That is, $S_{c}=(S_{j})_{j\in c}$ and $s_{c}=(s_{j})_{j\in c}$. Note that the strategy space can be written as $(S_{c},S_{N\smallsetminus c}%
)$. The space $S_{c}$ represents the domain of \emph{deviations} for coalition
$c$. At $s$ if coalition $c$ takes deviation $\widetilde{s}_{c}\in S_{c}$, the
resulting strategy profile is $(\widetilde{s}_{c},s_{N\smallsetminus c}%
)\in(S_{c},S_{N\smallsetminus c})$. This is a \emph{profitable deviation} for
coalition $c$ if for each $j\in c$, $U_{j}(\widetilde{s}_{c}%
,s_{N\smallsetminus c})\geq U_{j}(s)$, and for some $j\in c$, $U_{j}%
(\widetilde{s}_{c},s_{N\smallsetminus c})>U_{j}(s)$. That is, the deviation
makes coalition $c$ better off in the Pareto sense. A strategy profile $s$ is called $c$\emph{-stable} if coalition $c$ has no profitable deviation at $s$, and $C$\emph{-stable} if for coalition structure $C$, $s$ is
$c$\emph{-stable} for each $c\in C$.

\medskip

Notice that a strategy profile is a
\emph{super strong equilibrium} if it is $\mathcal{P}_{\geq1}(N)$-stable, and a strategy profile is a \emph{Nash
equilibrium} if it is $\mathcal{P}_{=1}(N)$-stable where $\mathcal{P}%
_{=1}(N)=\{c\subset N|$ $\left \vert c\right \vert =1\}$. We now define the partition equilibrium which was introduced in the earlier literature, and the three notions of equilibrium which are introduced first in our paper.

\begin{itemize}
\item[-] \textbf{Partition Equilibrium:} A coalition structure $C$ is a
\emph{partition} if for each $j\in N$, there exists a unique coalition $c\in
C$ such that $j\in c$. Given a partition coalition structure $C$, a strategy
profile is a \emph{partition equilibrium} if it is $C$-stable.

\item[-] \textbf{Laminar Equilibrium}: A coalition structure $C$ is
\emph{laminar} if for any two coalitions $c_{1},c_{2}\in C$ such that
$c_{1}\cap c_{2}\neq \emptyset$, either $c_{1}\subseteq c_{2}$ or
$c_{2}\subseteq c_{1}$. Given a laminar coalition structure $C$, a strategy
profile is a \emph{laminar equilibrium} if it is $C$-stable.

\item[-] \textbf{Contiguous Equilibrium:} A coalition structure $C$ is
\emph{contiguous} if there exists a path $P:j_{1}-j_{2}-\cdots-j_{\left \vert
N\right \vert }$ (the vertices are agents) in accordance with $C$ in the
following sense: for each $c\in C$, the agents in $c$ are subsequently ordered
under $P$. Given a contiguous coalition structure $C$, a strategy profile is a
\emph{contiguous equilibrium} if it is $C$-stable.

\item[-] \textbf{Centralized Equilibrium:} A coalition structure $C$ is
\emph{centralized} if there exists a planar representation $(\phi,\psi)$,
where $\phi:N\rightarrow \mathbb{R}^{2}$ and $\psi:C\rightarrow(N\times
\mathbb{R}_{>0})$, which is in accordance with $C$ in the following sense:

\begin{itemize}
\item[-] For each $c\in C$, in the Cartesian space, $\psi \left(  c\right)  $
corresponds to the following circle: the circle's center is at point
$\phi(\psi_{1}(c))$ and its radius is $\psi_{2}(c)$.

\item[-] For an agent $j$, $\phi(j)$ lies inside the circle corresponding to
$\psi \left(  c\right)  $ (the boundary included) if and only if $j\in c$.
\end{itemize}

In simpler terms, agents lie on a plane and a viable coalition consists of
agents that lie inside a circle with the restriction that one coalition member
lies at the circle's center. Given a centralized coalition structure $C$, a
strategy profile is a \emph{centralized equilibrium} if it is $C$-stable.
\end{itemize}

Let $\mathcal{C}^{sse}=\left \{  \mathcal{P}_{\geq1}(N)\right \}  $. Let
$\mathcal{C}^{ne}=\{ \mathcal{P}_{=1}(N)\}$. Also, let $\mathcal{C}%
^{pe},\mathcal{C}^{le},\mathcal{C}^{coe},\mathcal{C}^{cee}$ be, respectively,
the domains of coalition structures that are partitions, laminar, contiguous,
and centralized. Thus, a strategy profile that is $C$-stable is a super strong
equilibrium if $C\in \mathcal{C}^{sse}$; a Nash equilibrium if $C\in
\mathcal{C}^{ne}$; a partition equilibrium if $C\in \mathcal{C}^{pe}$, and so on.

\medskip

Recall that our equilibrium notions are inspired by various real-life social structures. So it could well be the case that they are not interrelated. However, we show that each equilibrium notion above generalizes the preceding one, in Theorem \ref{teo_mostrefined_SSE}, the proof of which appears in Appendix A.

\begin{theorem}
\label{teo_mostrefined_SSE} We have $\mathcal{C}^{ne}\subseteq \mathcal{C}^{pe}\subseteq \mathcal{C}^{le}\subseteq \mathcal{C}^{coe}\subseteq \mathcal{C}^{cee}$. Also,

\begin{itemize}
\item[-] $\mathcal{C}^{ne}\subset \mathcal{C}^{pe}\subset \mathcal{C}^{le}$ for $|N|\geq 2$,
\item[-] $\mathcal{C}^{le}\subset \mathcal{C}^{coe}$ for $|N|\geq 3$,
\item[-] $\mathcal{C}^{coe}\subset \mathcal{C}^{cee}$ for $|N|\geq 4$,
\item[-] for each $C\in \mathcal{C}^{ne}\cup \mathcal{C}^{pe}\cup \mathcal{C}^{le}\cup \mathcal{C}^{coe}\cup \mathcal{C}^{cee}$, $C\subseteq \mathcal{P}_{\geq1}(N)$,
\item[-]$\mathcal{P}_{\geq1}(N)\notin \mathcal{C}^{ne}\cup \mathcal{C}^{pe}\cup \mathcal{C}^{le}\cup \mathcal{C}^{coe}\cup \mathcal{C}^{cee}$ for $|N|\geq 3$.
\end{itemize}

\medskip

That is, centralized equilibrium is a generalization of contiguous equilibrium, contiguous equilibrium is a generalization of laminar equilibrium, laminar equilibrium is a generalization of partition equilibrium, partition equilibrium is a generalization of Nash equilibrium. Super strong equilibrium is a refinement of all these equilibrium notions. The generalizations and the refinement are nontrivial for $|N|\geq 4$.
\end{theorem}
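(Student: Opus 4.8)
The plan is to split the statement into three groups of claims and dispatch each: the chain of weak inclusions $\mathcal{C}^{ne}\subseteq\mathcal{C}^{pe}\subseteq\mathcal{C}^{le}\subseteq\mathcal{C}^{coe}\subseteq\mathcal{C}^{cee}$, the four strictness claims with their cardinality thresholds, and the two boundary assertions about $\mathcal{P}_{\geq1}(N)$. Two links of the chain are immediate from the definitions: $\mathcal{P}_{=1}(N)$ is itself a partition, giving $\mathcal{C}^{ne}\subseteq\mathcal{C}^{pe}$, and every partition is vacuously laminar because its blocks are pairwise disjoint so the laminar condition never triggers, giving $\mathcal{C}^{pe}\subseteq\mathcal{C}^{le}$. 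The penultimate bullet is also immediate, since by definition each coalition structure of any of the five types is a subset of $\mathcal{P}_{\geq1}(N)$.

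For $\mathcal{C}^{le}\subseteq\mathcal{C}^{coe}$ I would use the forest representation of a laminar family. Order the sets of $C$ by inclusion; this yields a forest in which the parent of $c$ is the $\subseteq$-minimal set of $C$ strictly containing it, the roots being the $\subseteq$-maximal sets. I would then build the path $P$ recursively: to lay out a set $c$, recursively lay out each child as a contiguous block, then concatenate these blocks together with the elements of $c$ in no child, in any order; at the top level concatenate the blocks of the maximal sets together with the agents in no set of $C$. An easy induction shows each $c\in C$ occupies a contiguous block of $P$, since it sits as a contiguous sub-block inside the (already contiguous) block of its parent; hence $C$ is contiguous.

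The geometric step $\mathcal{C}^{coe}\subseteq\mathcal{C}^{cee}$ is the one I expect to be the main obstacle, because a member-centered disk on a line captures an interval that is forced to be ``symmetric'' about its center, whereas a contiguous coalition is an arbitrary interval of the path. The device I would use is a super-increasing placement: put the $i$-th agent of the path at $\phi(j_{i})=(2^{i},0)$. For a coalition equal to the interval $\{j_{a},\dots,j_{b}\}$, center the circle at its right endpoint $j_{b}$; the distance from $j_{b}$ to any $j_{i}$ with $i\leq b$ is $2^{b}-2^{i}<2^{b}$, while the distance to $j_{b+1}$ is exactly $2^{b}$. Choosing the radius in $[2^{b}-2^{a},\,2^{b}-2^{a-1})$ (or in $[2^{b}-2,\,2^{b})$ when $a=1$) therefore captures exactly $\{j_{a},\dots,j_{b}\}$, since distances to points left of $j_{b}$ strictly increase as one moves left and the first point to the right already sits at distance $2^{b}$. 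As $j_{b}$ is a member, this is a valid centralized representation of every contiguous $C$.

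For the strict inclusions I would exhibit small witnesses: for $|N|\geq2$, the partition $\{N\}$ lies in $\mathcal{C}^{pe}\setminus\mathcal{C}^{ne}$, and $\{\{1\},\{1,2\}\}$ lies in $\mathcal{C}^{le}\setminus\mathcal{C}^{pe}$ (agent $1$ is in two sets, so it is not a partition, yet it is laminar); for $|N|\geq3$, $\{\{1,2\},\{2,3\}\}$ is contiguous under the path $1\text{-}2\text{-}\cdots\text{-}n$ but not laminar; and for $|N|\geq4$ the ``claw'' $\{\{a,c\},\{b,c\},\{c,d\}\}$ is centralized (place $c$ between $a$ and $b$ on a line and $d$ far off the line, and verify the three disks directly) but not contiguous, as it would force $c$ to have three path-neighbours. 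Finally, for $\mathcal{P}_{\geq1}(N)\notin\mathcal{C}^{ne}\cup\dots\cup\mathcal{C}^{cee}$ when $|N|\geq3$, it suffices by the chain to show $\mathcal{P}_{\geq1}(N)\notin\mathcal{C}^{cee}$: in any planar placement, writing $d_{xy}$ for the distance between $\phi(x)$ and $\phi(y)$, pick any three agents $i,j,k$ and let $\{i,j\}$ be the pair among them of largest distance; a disk realizing the coalition $\{i,j\}$ must be centered at $i$ or at $j$, and centering at $i$ forces radius $\geq d_{ij}\geq d_{ik}$, which also includes $k$, and symmetrically for $j$, so $\{i,j\}$ cannot exclude $k$ — contradicting that $\mathcal{P}_{\geq1}(N)$ contains $\{i,j\}$. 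Together these thresholds justify the closing remark that the separations are all nontrivial once $|N|\geq4$.
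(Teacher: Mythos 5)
Your proposal is correct, and while its skeleton (the chain of weak inclusions, small witnesses for strictness, and the two boundary claims) matches the paper's, several key steps take genuinely different routes. For $\mathcal{C}^{le}\subseteq\mathcal{C}^{coe}$ you and the paper do essentially the same thing: the paper peels off a $\subseteq$-maximal coalition and applies strong induction on $|N|$, which is your forest recursion in different clothing. The real divergences are three. First, for $\mathcal{C}^{coe}\subseteq\mathcal{C}^{cee}$ the paper runs an iterative algorithm that pads a list of agents with dummy symbols, proves an invariant about the padding by induction, and spaces the agents on a line accordingly; your placement at abscissae $2^{i}$ achieves the same effect with no algorithm and no induction, since the super-increasing gaps do the padding automatically. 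Notably, you center each disk at the right endpoint of its interval, which is the choice that actually works: the paper's closing paragraph says to center at the smallest-index agent, which as literally written fails already for $\{\{1,2\},\{2,3\}\}$ on the path $1-2-3$ (the invariant it establishes only protects the right side of each interval), so your construction is in effect the corrected version of the paper's. Second, for $\mathcal{C}^{coe}\subset\mathcal{C}^{cee}$ the paper's witness is the four triples $\{\{1,2,3\},\{2,3,4\},\{3,4,1\},\{4,1,2\}\}$ realized on a unit square, with a symmetry argument for non-contiguity; your claw $\{\{a,c\},\{b,c\},\{c,d\}\}$ is leaner, since non-contiguity reduces to the degree bound on a path vertex. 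Third, for the final bullet the paper rules out $\mathcal{C}^{coe}$ and $\mathcal{C}^{cee}$ separately, the latter via a case analysis on which agents center the circles for $\{1,2\},\{2,3\},\{1,3\}$ ending in a cyclic chain of radius inequalities; you observe that the proven chain collapses the entire union to $\mathcal{C}^{cee}$ and then give a farthest-pair argument, which is shorter and also handles degenerate (coincident) placements uniformly because the boundary of each circle is included. What the paper's treatment buys is that each bullet is argued independently of the others; what yours buys is brevity and, in the contiguous-to-centralized step, a construction that is robust where the paper's statement of the center is slipped.
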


\section{An Application: Resource Selection Games}
\label{sec:applications}

A \emph{resource selection game} (RSG) is a triplet $\langle
N,M,f\rangle$ where $N:\{1, 2, \ldots, n\}$ is the set of \emph{agents}, $M:\{1, 2, \ldots, m\}$ is the set of \emph{resources} and $f:(f_{i})_{i = 1}^{m}$ is the profile of strictly monotonic increasing \emph{cost functions} such that $f_{i}(0) = 0$ for all $i \in \{1,2, \ldots, m\}$. When $q$ agents use resource $i$, each incurs a cost equal to $f_{i}(q)$. Each agent tries to minimize the cost it incurs. In the rest of the paper we fix the game $\langle N,M,f\rangle$.

\medskip

An \emph{allocation} is a sequence $a:(a_{i})_{i=1}^{m}$ such that: $i)$ For each $i\in M$, we have $a_{i}\subseteq N$; $ii)$ For every $i, i ^{\prime}\in M$ such that $i\neq i^{\prime}$, we have $a_{i} \cap a_{i^{\prime}} = \emptyset$; and $iii)$ We have $\bigcup_{i\in M} a_{i} = N$. Above, $a_{i}$ denotes the set of agents that are assigned to resource $i$ at allocation $a$. Thus, at allocation $a$, each agent in $a_{i}$ incurs a cost equal to $f_{i}(|a_{i}|)$. Let $\mathcal{A}$ be the domain of allocations.

The \emph{maxcost} of an allocation $a$ is the maximum cost incurred by an
agent at $a$. That is, the maxcost of allocation $a$ equals $max_{i\in M}$
$f_{i}(|a_{i}|)$. The \emph{minmaxcost} of the RSG, to be denoted by $\alpha$, is the maxcost of
the allocation whose maxcost is smallest. That is, $\alpha=min_{a\in
\mathcal{A}}$ $max_{i\in M}$ $f_{i}(|a_{i}|)$.

Let $q_{i}=max_{q\in \mathbb{Z}_{\geq0}}$ $f_{i}(q)\leq \alpha$. We refer to
$q_{i}$ as resource $i$'s quota. That is, a resource's quota is the maximum
number of agents that can be assigned to it without making its cost exceed
$\alpha$. We distinguish between resources which can and cannot attain the
minmaxcost $\alpha$. A resource $i$ is a \emph{Type 1 resource} if $f_{i}(q_{i}) = \alpha$, and a \emph{Type 2 resource} if $f_{i}(q_{i}) < \alpha$.

Let $T_{1}$ and $T_{2}$ denote, respectively, the sets of type 1 and type 2
resources. Since the minmaxcost of the game is $\alpha$, we have $T_{1}%
\neq \emptyset$. Also, for $i\in T_{1}$, let $\beta_{i}=f_{i}(q_{i}-1)$. We
refer to $\beta_{i}$ as resource $i$'s \emph{beta value}. Note that for a type
1 resource $i$, its beta value is its cost when the number of agents assigned
to it is one less than its quota.

\medskip

Note that an RSG is a non-cooperative game in the strategic form although its
formulation here is different from the formulation of a strategic form game in
Section 2. Here, agents' payoffs are negative (i.e., they incur costs rather
than receive payoffs) and an agent's strategy space is the set $M$ (i.e., the
agent selects one of the resources).

In this context, we continue to use the terminology in Section 2 in regards to
coalitions and coalition structures; i.e., $c,C$, $\mathcal{P}_{=1}(N)$,
$\mathcal{P}_{\geq1}(N)$, $\mathcal{C}^{sse}$, $\mathcal{C}^{ne}$,
$\mathcal{C}^{le}$, $\mathcal{C}^{coe}$, $\mathcal{C}^{cee}$ are as described
in Section 2. We also use the terminology in Section 2 regarding the stability
and equilibrium notions but with one exception: Note that in an RSG an
allocation fully specifies the strategies of agents. Therefore, in this
context we speak of an \textquotedblleft allocation\textquotedblright \ as a
substitute for a strategy profile. Hence, in this context, rather than a
strategy profile we speak of an allocation being $c$-stable or $C$-stable; or
being a laminar equilibrium or a contiguous equilibrium.

Also, in this context, we represent a deviation by a coalition $c$ as a
sequence $(c_{i})_{i=1}^{m}$ such that: (i) $c_{1}\cup c_{2}\cup \cdots \cup
c_{m}=c$; and (ii) for each $i,i^{\prime}\in M$ and $i\neq i^{\prime}$,
$c_{i}\cap c_{i^{\prime}}=\emptyset$. That is, a deviation is an agreement by
coalition members on which resources they will use: $c_{i}$ is the set of
coalition members who agree to use resource $i$. We use $a\circ(c_{i}%
)_{i=1}^{m}$ to denote the allocation that results when coalition $c$ takes
deviation $(c_{i})_{i=1}^{m}$ at allocation $a$: i.e., after the deviation the
set of agents that are assigned to resource $i$ is $(a_{i}\smallsetminus
c)\cup c_{i}$. Also, note that a deviation is a profitable deviation if at the
resulting allocation each coalition member becomes \emph{weakly better off}
(i.e., the cost it incurs does not increase) and at least one of them becomes
\emph{better off} (i.e., the cost it incurs decreases).

\medskip

The notion of a super strong equilibrium is very appealing since it precludes
profitable deviations by any coalition of agents. However, in most game forms
a super strong equilibrium is not guaranteed to exist. The is also true for
RSGs; see the Example \ref{ex_noSSE} below.

\begin{example}[due to Feldman and Tennenholtz \cite{Ref_FT2009}]
\label{ex_noSSE} Consider
the RSG where $N=\{1,2,3\}$, $M=\{1,2\}$, and $f_{i}(q_{i})=q_{i}$ for $i=1,2$.

\medskip

In this RSG there exists no super strong equilibrium. To see this note that:
At an allocation where all agents are assigned to the same resource, an agent
that deviates to the other resource becomes better off. In all other
allocations, two agents are assigned to one of the resources and one agent is
assigned to the other resource. Wlog., let agents $1$ and $2$ be assigned to
resource $1$ and agent $3$ to resource $2$. But now the coalition $c=\{1,2\}$
has a profitable deviation: When agent $1$ deviates to resource $2$, agent $1$
becomes weakly better off and agent $2$ becomes better off. \hfill$\Diamond$
\end{example}

We next present a characterization of Nash equilibrium in RSGs given by \cite{Ref_ACH2013}.

\begin{theorem}[due to Anshelevich et al. \cite{Ref_ACH2013}]
\label{thm:ExistenceNash}
In RSGs there always exists a
Nash equilibrium allocation. An allocation $a$ is a Nash equilibrium if and
only if:
\begin{itemize}
\item[-] for each $i\in T_{2}$, $\left \vert a_{i}\right \vert =q_{i}$;
\item[-] for each $i\in T_{1}$, $\left \vert a_{i}\right \vert \in \left \{
q_{i}-1,q_{i}\right \}  $;
\item[-] for some $i\in T_{1}$, $\left \vert a_{i}\right \vert =q_{i}$.
\end{itemize}
\end{theorem}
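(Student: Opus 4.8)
The plan is to prove both directions of the characterization and then, using the characterizing conditions, to exhibit an allocation satisfying them, which settles existence. Throughout I will work with the RSG restatement of the Nash condition: an allocation $a$ is a Nash equilibrium if and only if no agent can lower its cost by moving alone to another resource, i.e., for every non-empty resource $i$ and every resource $i'\neq i$ we have $f_i(|a_i|)\le f_{i'}(|a_{i'}|+1)$. I will also repeatedly use that the maxcost of \emph{any} allocation is at least $\alpha$, directly from the definition of $\alpha$ as the minmaxcost.

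The crucial preliminary step, and the one I expect to be the main obstacle, is to show that every Nash equilibrium has maxcost exactly $\alpha$. The lower bound is immediate. For the upper bound I would argue by contradiction: if some resource $i^\ast$ carries cost strictly above $\alpha$, then by strict monotonicity $|a_{i^\ast}|>q_{i^\ast}$, so $a$ places strictly more agents on $i^\ast$ than any minmaxcost allocation $a^\ast$ does (recall $|a^\ast_i|\le q_i$ for all $i$, since $a^\ast$ has maxcost $\alpha$). Because both allocations distribute the same $n$ agents, some resource $i'$ must be strictly less loaded in $a$ than in $a^\ast$, whence $|a_{i'}|+1\le q_{i'}$ and $f_{i'}(|a_{i'}|+1)\le\alpha$. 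An agent on $i^\ast$ could then profitably move to $i'$, contradicting the Nash property. Once the maxcost is pinned at $\alpha$, it follows that $|a_i|\le q_i$ for every resource.

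With maxcost $=\alpha$ in hand, the ``only if'' direction reduces to a sequence of short relocation arguments, each using a resource $i^\ast$ that attains cost $\alpha$. If a Type 2 resource $i$ had $|a_i|<q_i$, then $f_i(|a_i|+1)\le f_i(q_i)<\alpha$, so an agent on $i^\ast$ could move to $i$ and strictly improve, a contradiction; hence every Type 2 resource is at full quota. The same argument, now invoking $\beta_i=f_i(q_i-1)<\alpha$, rules out $|a_i|\le q_i-2$ for a Type 1 resource, giving $|a_i|\in\{q_i-1,q_i\}$. Finally, the cost-$\alpha$ resource $i^\ast$ cannot be Type 2 (those cost $f_i(q_i)<\alpha$), so it is Type 1 with $f_{i^\ast}(|a_{i^\ast}|)=\alpha=f_{i^\ast}(q_{i^\ast})$, forcing $|a_{i^\ast}|=q_{i^\ast}$.

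For the ``if'' direction I would verify the Nash inequality by a case split on the target resource $i'$ of a hypothetical deviator, noting first that all costs are $\le\alpha$ under the three conditions. If $i'$ is Type 2, or Type 1 already at its quota, then $|a_{i'}|+1=q_{i'}+1$ and by definition of the quota $f_{i'}(q_{i'}+1)>\alpha$, exceeding the deviator's current cost; if $i'$ is Type 1 at $q_{i'}-1$, then $f_{i'}(|a_{i'}|+1)=f_{i'}(q_{i'})=\alpha$, again not strictly below the deviator's cost. So no deviation is profitable. Existence then follows by constructing one such allocation: a counting argument gives $\sum_i q_i-|T_1|+1\le n\le\sum_i q_i$ (the right inequality from the minmaxcost allocation, the left because otherwise all agents could be packed with every cost strictly below $\alpha$, contradicting the minimality of $\alpha$), so setting exactly $k=n-\sum_i q_i+|T_1|\in\{1,\dots,|T_1|\}$ of the Type 1 resources to their quota and the remaining Type 1 resources to $q_i-1$, with every Type 2 resource at $q_i$, yields a feasible allocation meeting the three conditions, hence a Nash equilibrium.
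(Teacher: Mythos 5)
Your proof is correct, but note that the paper itself contains no proof of this statement: it is imported as background, attributed to Anshelevich et al.\ \cite{Ref_ACH2013}, so there is no in-paper argument to compare against. Your blind reconstruction is a clean, self-contained derivation and each step checks out. The load-bearing step is exactly where you said it would be: pinning the maxcost of any Nash equilibrium at $\alpha$ via the counting comparison with a minmaxcost allocation $a^{\ast}$ (some resource must be strictly less loaded in $a$ than in $a^{\ast}$, giving a landing spot of cost at most $\alpha$), after which both directions of the characterization reduce to one-line relocation arguments using strict monotonicity and the definitions of $q_i$, $T_1$, $T_2$, $\beta_i$. Your existence argument is also sound: the two-sided bound $\sum_i q_i - |T_1| + 1 \le n \le \sum_i q_i$ (upper bound from feasibility of $a^{\ast}$, lower bound since otherwise an allocation with all costs strictly below $\alpha$ would contradict minimality of $\alpha$) guarantees $k = n - \sum_i q_i + |T_1| \in \{1,\dots,|T_1|\}$, and filling $k$ Type 1 resources to quota, the rest to $q_i - 1$, and all Type 2 resources to quota meets the three conditions. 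Two trivial points you use implicitly and could state: $\alpha > 0$ (so the cost-$\alpha$ resource is non-empty and $q_i \ge 1$ for every Type 1 resource), and injectivity of each $f_i$ (so cost $\alpha$ on a Type 1 resource forces load exactly $q_i$); both follow immediately from strict monotonicity and $f_i(0)=0$. What your approach buys relative to the paper is self-containedness; it also makes explicit the structural facts (maxcost of a Nash equilibrium equals $\alpha$, and the count of full Type 1 resources is the invariant $k$) that the paper's Corollary \ref{cor_NE_numberofLowHigh} then reads off directly.
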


Let allocation $a$ be a Nash equilibrium. We need to designate the set of type
1 resources that are not assigned at $a$ up to their quotas: Let $L(a)=\{i\in
T_{1}\ |\ |a_{i}|=q_{i}-1\}$. Also, let $H(a)=M\setminus L(a)$. We refer to
the resources in $L(a)$ and in $H(a)$ as \textit{low} and \textit{high}
resources at $a$, respectively. The corollary below immediately follows from
the above theorem and it will be useful later on.

\begin{corollary}
\label{cor_NE_numberofLowHigh} Let allocation $a$ be a Nash equilibrium. Then,
$|L(a)|=\sum_{i\in M}q_{i}-n$ and $|H(a)|=m-|L(a)|$. Therefore, the number of low and high resources are the same at every Nash
equilibrium allocation.
\end{corollary}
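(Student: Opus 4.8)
The plan is to count the total number of agents $n$ by summing the loads $|a_i|$ across all resources, exploiting the characterization of Nash equilibria in Theorem~\ref{thm:ExistenceNash} to pin down each load. Since $\{L(a),H(a)\}$ is a partition of $M$ and $L(a)$ collects exactly those type~1 resources carrying one agent below their quota, the characterization tells me that $|a_i|=q_i-1$ for each $i\in L(a)$ and $|a_i|=q_i$ for each $i\in H(a)$ (the latter covering all of $T_2$ together with the type~1 resources loaded up to their quota).

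First I would write $n=\sum_{i\in M}|a_i|$, which holds because an allocation partitions $N$ among the resources. Splitting the sum over $L(a)$ and $H(a)$ and substituting the loads above yields
\[
n=\sum_{i\in L(a)}(q_i-1)+\sum_{i\in H(a)}q_i=\sum_{i\in M}q_i-|L(a)|.
\]
Rearranging gives $|L(a)|=\sum_{i\in M}q_i-n$, the first claimed identity. The second, $|H(a)|=m-|L(a)|$, is then immediate from $|M|=m$ and the fact that $L(a)$ and $H(a)$ partition $M$.

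Finally, to conclude that the counts are allocation-independent, I would observe that the right-hand sides $\sum_{i\in M}q_i-n$ and $m-(\sum_{i\in M}q_i-n)$ depend only on the data $\langle N,M,f\rangle$ of the fixed RSG (through the quotas $q_i$, the agent count $n$, and the resource count $m$), and not on the particular Nash equilibrium $a$. Hence every Nash equilibrium allocation has the same number of low resources and the same number of high resources.

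I anticipate no real obstacle here: the argument is a single conservation-of-agents count. The only point requiring minor care is verifying that the case decomposition of the loads is exhaustive and correct across all three conditions of Theorem~\ref{thm:ExistenceNash} --- in particular that every type~2 resource contributes exactly $q_i$ and therefore sits in $H(a)$ rather than $L(a)$ --- so that the total shortfall from $\sum_{i\in M}q_i$ is accounted for entirely by the unit deficits of the resources in $L(a)$.
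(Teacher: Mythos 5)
Your proof is correct and matches the paper's intent exactly: the paper states this corollary as an immediate consequence of the Nash equilibrium characterization (Theorem~\ref{thm:ExistenceNash}), and your conservation-of-agents count $n=\sum_{i\in L(a)}(q_i-1)+\sum_{i\in H(a)}q_i$ is precisely the argument being invoked. The final observation that the right-hand sides depend only on the game data, not on $a$, correctly justifies the allocation-independence claim.
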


The rest of this section is divided into two parts. We present our existence and non-existence results for laminar equilibrium notion in Section 3.1. We present our existence and non-existence results for contiguous and centralized equilibrium notions in Section 3.2.

\subsection{Existence and Non-Existence Results for Laminar Equilibrium}
\label{sec:LaminarNotExists}

In this section, we present our existence and non-existence results for laminar equilibrium. Our results resolves an open question in the literature. In their paper, Anshelevich et al. \cite{Ref_ACH2013} showed that in an RSG, for any given partition coalition structure, there exists a partition equilibrium, as stated in the following theorem.

\begin{theorem}[due to Anshelevich et al. \cite{Ref_ACH2013}] In an RSG, for any given
partition coalition structure $C\in C^{pe}$, there exists a Nash equilibrium
allocation which is $C$-stable. That is, in an RSG there always exists a partition equilibrium (which is also
Nash equilibrium).
\end{theorem}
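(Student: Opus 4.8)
The plan is to exploit the Nash-equilibrium characterization of Theorem~\ref{thm:ExistenceNash} together with Corollary~\ref{cor_NE_numberofLowHigh}, and then to single out one particular Nash equilibrium by an extremal (potential-minimizing) argument. First I would record two structural facts. At every Nash equilibrium $a$ the maxcost equals $\alpha$: the agents on high type-$1$ resources pay exactly $\alpha$, while all other agents (on low type-$1$ resources and on type-$2$ resources) pay strictly less than $\alpha$. Moreover, the \emph{only} single-agent relocation carrying one Nash equilibrium to another is to move an agent off a high type-$1$ resource (which drops from its quota $q_i$ to $q_i-1$, hence becomes low) onto a low type-$1$ resource (which rises from $q_i-1$ to $q_i$, hence becomes high); every other single-agent relocation violates one of the three conditions of Theorem~\ref{thm:ExistenceNash}. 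Note that if $|L(a)|=0$, i.e. $\sum_{i\in M} q_i = n$, then every resource is filled exactly to its quota, no reallocation can keep all loads within quota, and any Nash equilibrium is trivially $C$-stable; so I may assume there is slack, i.e. at least one low resource.

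Next I would introduce the potential $\Psi(a)=\sum_{c\in C}\sum_{i\in M}\binom{|c\cap a_i|}{2}$, counting the same-coalition pairs of agents that share a resource, and let $a^{*}$ minimize $\Psi$ over all Nash equilibria (the set of which is nonempty by Theorem~\ref{thm:ExistenceNash} and finite). The key computation is that the NE-preserving relocation above, when it moves a member of a coalition $c$ from a high resource $h$ carrying at least two members of $c$ onto a low resource $\ell$ carrying none of $c$, changes $\Psi$ by exactly $|c\cap a_\ell|-|c\cap a_h|+1<0$ and leaves the other coalitions' contributions untouched; crucially this quantity never refers to the partition of resources into high and low, so it records a genuine drop in $\Psi$. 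Hence at the minimizer $a^{*}$ no such de-concentrating relocation is available: whenever a coalition $c$ is absent from some low resource, it holds at most one member on each high resource.

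I would then prove that $a^{*}$ is $C$-stable. Suppose some coalition $c$ had a profitable deviation at $a^{*}$. I would first argue that the deviation overloads no resource: if a resource received a member of $c$ and thereby reached load $q_i+1$ or more, that member would pay at least $f_i(q_i+1)>\alpha$ by the definition of the quota, contradicting weak improvement, while a resource that only loses members of $c$ stays at or below its quota. Consequently the resulting allocation again has maxcost $\alpha$ and keeps every member of $c$ at cost at most $\alpha$. Since the deviation strictly helps some member, and $a^{*}$ is a Nash equilibrium, that member can gain only if $c$ strictly decongests the resource it ends up on by withdrawing its own members from it; tracing this circulation of $c$-members from the (quota-filled) resources they leave to the resources they decongest, I would show it must in particular realize a single de-concentrating high-to-low relocation of the kind above, contradicting the $\Psi$-minimality of $a^{*}$.

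The main obstacle is precisely this last reduction: turning an arbitrary Pareto-improving coalition deviation into the existence of one de-concentrating single relocation that preserves the Nash property. This is a flow/exchange argument about how a coalition can lower a member's cost only by pulling its members off quota-filled (high) resources while simultaneously keeping some resource below its quota, and it is exactly where the disjointness of the partition $C$ and the rigidity of the Nash characterization (loads pinned to $q_i$ or $q_i-1$, with $|L(a)|$ fixed by Corollary~\ref{cor_NE_numberofLowHigh}) must be used carefully; the remaining steps are routine given Theorem~\ref{thm:ExistenceNash} and Corollary~\ref{cor_NE_numberofLowHigh}.
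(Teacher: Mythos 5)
Your proposal breaks down exactly at the step you flag as ``the main obstacle,'' and the problem is not merely that the reduction is unfinished --- as stated, it is false. The potential $\Psi(a)=\sum_{c\in C}\sum_{i\in M}\binom{|c\cap a_i|}{2}$ is completely blind to the cost functions, while in non-identical-resource RSGs a coalition can profit by deviations that permute which resources are high and low without changing $\Psi$ at all. Concretely: let $M=\{1,2\}$ with $f_1(1)=1$, $f_1(2)=4$, $f_1(3)=5$ and $f_2(1)=2$, $f_2(2)=4$, $f_2(3)=5$, so $q_1=q_2=2$, $\alpha=4$, and $\beta_1=1<\beta_2=2$ (both resources type 1); let $N=\{1,2,3\}$ and take the partition $C=\{N\}$. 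The allocation $a$ with $a_1=\{1,2\}$, $a_2=\{3\}$ is a Nash equilibrium, every Nash equilibrium of this instance has $\Psi=1$, so $a$ is a global $\Psi$-minimizer, and your extracted invariant holds at $a$ (the unique low resource carries a member of the coalition). Yet the coalition has a profitable deviation: agents $1,2$ move to resource $2$ (still paying $\alpha=4$) while agent $3$ moves to resource $1$ and pays $\beta_1=1$ instead of $\beta_2=2$. This deviation leaves $\Psi$ unchanged --- it swaps the roles of the two resources --- so no de-concentrating, $\Psi$-decreasing relocation can be extracted from it, and $\Psi$-minimization alone can never rule it out.

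What is missing is a secondary, cost-aware selection criterion, and the paper's own machinery (developed for the laminar two-resource case) shows precisely this. Lemma \ref{lem:characterization} has three conditions, and your invariant only addresses the analogue of C1/C2; condition C3 says that when the \emph{high} resource has the \emph{smaller} beta value, the coalition may have at most as many members there as on the low resource (not one more). Correspondingly, the proof of Theorem \ref{teo_two_yesLE} in Appendix~B works with the lexicographic $\gamma\beta$-domination: first maximize the spread $\gamma(a,C)$ (the analogue of minimizing your $\Psi$), and then, among ties, minimize the total cost $\beta(a)=\sum_{i\in M} f_i(|a_i|)$; it is this second criterion that eliminates C3-type violations. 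In the example above the tie-break indeed selects the stable equilibrium ($\beta=5$ for the allocation with resource $2$ high, versus $\beta=6$ for $a$). So a repaired version of your argument needs at least the lexicographic objective and an exchange argument handling both violation types --- which is essentially the structure of the argument of Anshelevich et al.\ cited by the paper, and of the paper's Appendix~B. The preliminary parts of your proposal (the NE characterization, the no-overloading argument, and the degenerate case $|L(a)|=0$) are fine.
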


They also conjectured that the following more general claim holds true: For any given laminar coalition structure, there exists a laminar equilibrium. We first prove their conjecture for the special setting where there are only two resources. This result is presented below, whose proof is given in Appendix B due to space limitations.

\begin{theorem} \label{teo_two_yesLE}
In a two-resource RSG, for any laminar coalition structure $C \in \mathcal{C}^{le}$, there exists a
$C$-stable allocation. That is, laminar equilibrium always exists in RSGs with two resources.
\end{theorem}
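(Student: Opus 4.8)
The plan is to reduce $C$-stability in a two-resource RSG to a purely combinatorial covering condition on the laminar family $C$, and then solve that condition using the nesting structure. First I would invoke Theorem~\ref{thm:ExistenceNash} to pin down the possible loads $(|a_1|,|a_2|)$ of a candidate allocation. Since $m=2$ and $T_1\neq\emptyset$, exactly three regimes arise: (i) both resources are Type~1 with $q_1+q_2=n$, so every Nash allocation puts each resource at its quota and all agents pay $\alpha$; (ii) one resource is Type~2, forcing $|a_1|=q_1,\ |a_2|=q_2$ with $q_1+q_2=n$; and (iii) both resources are Type~1 with $q_1+q_2=n+1$, so one resource is \emph{low} (load $q_i-1$, cost $\beta_i<\alpha$) and the other \emph{high} (at its quota, cost $\alpha$).

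The key structural step is a lemma characterizing profitable deviations when $m=2$: the only coalition that can profitably deviate from such an allocation is one lying entirely on the strictly-more-expensive resource, which sheds a member onto the cheaper one. I would prove this by a short case analysis on the split $(m_1,m_2)$ a deviating coalition could induce, using strict monotonicity of $f_1,f_2$: a coalition with a member on the cheaper resource cannot raise that resource's load without hurting him, and cannot lower the expensive resource's load without raising load (hence cost) on the other resource, so coalitions spanning both resources, coalitions confined to the cheaper resource, and singletons are always stable.

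From this lemma, regimes (i) and (ii) are immediate: in (i) both costs equal $\alpha$, and any nontrivial deviation overloads one resource and thereby hurts a coalition member placed there; in (ii) the cheaper (Type~2) resource is saturated, so $f_2(q_2+1)>\alpha$ makes shedding onto it unprofitable. Hence in both regimes \emph{every} Nash allocation is $C$-stable for every $C$. All the difficulty concentrates in regime (iii): a coalition of size $\geq 2$ sitting entirely on the high resource can move one member to the low resource, after which that member stays at cost $\alpha$ while each remaining member's cost drops below $\alpha$ — a profitable deviation. Thus an allocation in regime (iii) is $C$-stable \emph{if and only if} its set $S$ of low-resource agents meets every coalition $c\in C$ with $|c|\geq 2$.

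The final and main step is to produce such a hitting set within the capacity $q_{\mathrm{low}}-1$ of the low resource. Here I would exploit laminarity: the inclusion-minimal coalitions of size $\geq 2$ are pairwise disjoint (two intersecting ones would be nested, contradicting minimality), and every coalition of size $\geq 2$ contains one of them, so selecting one agent from each yields a hitting set of size $t$ equal to their number. Disjointness together with $|c|\geq 2$ gives $2t\leq n=q_1+q_2-1$, whence $t\leq\lfloor (q_1+q_2-1)/2\rfloor\leq \max(q_1,q_2)-1$. I would therefore designate the resource with the larger quota as the low resource, place this hitting set (padded by arbitrary remaining agents up to size $q_{\mathrm{low}}-1$) on it, and place the rest on the high resource; by the lemma this allocation is $C$-stable. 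The main obstacle is exactly this counting coordination — guaranteeing a minimum hitting set always fits on the low resource for \emph{every} laminar family — and the bound is tight precisely when $q_1=q_2$, where it survives only through the integer rounding in the floor.
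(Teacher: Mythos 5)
Your reduction to regimes (i)--(iii) is sound, and regimes (i) and (ii) do go through essentially as you describe. The fatal gap is your ``key structural lemma'' for regime (iii): it is \emph{not} true that the only profitable deviations come from coalitions lying entirely on the high resource, and hence your claimed characterization (``$C$-stable iff the low-resource set hits every coalition of size $\geq 2$'') is false. A coalition spanning both resources can profitably deviate by \emph{swapping the roles} of the two resources: its members on the high resource move to the low one, pushing it up to its quota (they still pay $\alpha$), while its members on the low resource relocate to the formerly high resource, which has now become low. Concretely, take $n=5$, $q_1=q_2=3$, both resources of Type 1, with $f_2(2)<f_1(2)<f_1(3)=f_2(3)=\alpha$, and let $C=\mathcal{P}_{=1}(N)\cup\{N\}$. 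Your construction may output $a_1=\{1,2\}$ (low, hitting $N$) and $a_2=\{3,4,5\}$ (high). But the deviation in which agents $3,4,5$ move to resource $1$ and agents $1,2$ move to resource $2$ leaves agents $3,4,5$ at cost $\alpha$ and drops agents $1,2$ from $f_1(2)$ to $f_2(2)$: a profitable deviation for $N\in C$, even though your hitting condition holds. The flaw is in the sentence ``cannot lower the expensive resource's load without raising load (hence cost) on the other resource'': the cheap resource's cost rises only to $\alpha$, the formerly expensive one falls below $\alpha$, and the coalition's members on the cheap resource need not stay there.

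This is exactly what the paper's Lemma \ref{lem:characterization} captures: besides your condition (its C1), stability requires the \emph{balance} conditions $|a_{\mathrm{high}}\cap c|\leq|a_{\mathrm{low}}\cap c|+1$ whenever $\beta_{\mathrm{high}}=\beta_{\mathrm{low}}$ (C2), and $|a_{\mathrm{high}}\cap c|\leq|a_{\mathrm{low}}\cap c|$ whenever $\beta_{\mathrm{high}}<\beta_{\mathrm{low}}$ (C3), for every coalition meeting the low resource. A transversal picking one representative per minimal coalition cannot satisfy these, which is why the paper needs heavier machinery: the two-color theorem of laminarity (Theorem \ref{thm:2color}) to split \emph{every} coalition of a laminar family nearly in half (settling the equal-beta case), and an iterative potential argument ($\gamma\beta$-domination over Nash equilibria) rather than a one-shot construction when the betas differ. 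Note also that your tie-breaking rule (low $=$ larger quota) points the wrong way: to render C2/C3 vacuous one wants the low resource to be the one with the \emph{smaller beta value}, and that resource's quota may be too small for your hitting set, so the capacity count you rely on does not resolve the real constraint.
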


In the next section, we also prove that their conjecture holds for the special setting where the resources are identical (Corollary \ref{cor:iden_yesLE}), which is implied by the more general result that contiguous equilibrium always exists in RSGs with identical resources. Alas, we show that in the general setting, their conjecture does not hold.

The rest of this section is devoted to prove that a laminar equilibrium does not necessarily exist in RSGs (Theorem \ref{teo_noLE}). The example that we use to show Theorem \ref{teo_noLE} is an intricate one, consisting of a large number of agents and resources. We present it below.

\begin{example} \label{ex_noLE}
 Consider an RSG as follows:

\begin{itemize}
\item[-] There are $n=14052$ agents and $m=2001$ resources.

\item[-] Every resource is of type 1.

\item[-] The set of resources can be written as $M=M_{x}\cup M_{y}\cup M_{z}$
such that:

\begin{itemize}
\item[-] $M_{x}=\{x\}$ and $q_{x}=53$.

\item[-] $M_{y}=\{y_{1},y_{2},\ldots,y_{1000}\}$ where each resource in $M_{y}$ has the same cost function, and $q_{y}=8$ for all $y \in M_{y}$.

\item[-] $M_{z}=\{z_{1},z_{2},\ldots,z_{1000}\}$ where each resource in $M_{z}$ has the same cost function, and $q_{z}=7$ for all $z \in M_{z}$.

\item[-] For all $y \in M_{y}$ and $z \in M_{z}$, we have $\beta_{x}>\beta_{y}>\beta_{z}>$
$f_{x}\left(q_{x}-2\right)$. \hfill $\diamond$
\end{itemize}
\end{itemize}
\end{example}

\begin{theorem}
\label{teo_noLE}In an RSG, for $C\in\mathcal{C}^{le}$, it may be that no
allocation is $C$-stable. That is, in RSGs a laminar equilibrium is not guaranteed to exist.
\end{theorem}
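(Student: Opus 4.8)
The plan is to exhibit a specific laminar coalition structure $C$ over the RSG of Example~\ref{ex_noLE} and prove that \emph{every} allocation fails to be $C$-stable. The starting point is Theorem~\ref{thm:ExistenceNash}: any $C$-stable allocation must in particular be $c$-stable for all singletons $c$, hence it must be a Nash equilibrium, so it has the quota structure described there. By Corollary~\ref{cor_NE_numberofLowHigh} the numbers of low and high resources are fixed across all Nash equilibria, so I would first compute $\sum_{i\in M}q_i = 53 + 1000\cdot 8 + 1000\cdot 7 = 15053$ against $n = 14052$, giving $|L(a)| = 15053 - 14052 = 1001$ low resources and $|H(a)| = 2001 - 1001 = 1000$ high resources at every Nash equilibrium. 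The whole argument then reduces to analysing, over this finite but rigidly constrained family of Nash allocations, how the low/high designation can be distributed among $x$, the $y$-resources, and the $z$-resources, and showing that in each case some viable coalition in $C$ has a profitable deviation.

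The design of $C$ is the creative core. I would build a laminar family whose coalitions are engineered so that the ordering $\beta_x > \beta_y > \beta_z > f_x(q_x-2)$ can be leveraged: the beta values encode exactly the marginal cost a low resource imposes on an agent who is "one below quota," and the final inequality $\beta_z > f_x(q_x-2)$ says that even dropping $x$ two agents below its quota is cheaper than sitting at the beta-level of any $y$ or $z$ resource. I would use small coalitions (pairs, or coalitions straddling one $x$, some $y$'s, and some $z$'s) that can perform swap-type deviations: an agent on a high resource trades places with, or shifts toward, a low resource, and the asymmetry in beta values makes at least one coalition member strictly better off while leaving the others weakly better off. The replication into $1000$ copies of $y$ and $z$ and the quota gap of one between them ($q_y = 8$, $q_z = 7$) is what forces, by a pigeonhole/counting argument on the $1001$ low resources, a configuration in which the relevant beta-value inequality can always be triggered regardless of which resources are chosen as low.

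The key steps, in order, are: (1) invoke Theorem~\ref{thm:ExistenceNash} and Corollary~\ref{cor_NE_numberofLowHigh} to reduce to Nash allocations with exactly $1001$ low and $1000$ high resources; (2) present the explicit laminar structure $C$ and verify it is genuinely laminar (pairwise nested-or-disjoint); (3) case-split on the low/high status of $x$ — since there are $1001$ low resources among $2001$ total, and only one copy of $x$, one argues that in any Nash equilibrium enough of the $y$'s and $z$'s take prescribed low/high roles; (4) in each case construct an explicit profitable deviation for a coalition $c\in C$, using the beta-value chain to certify that the deviating agent strictly improves and no coalition member is harmed; (5) conclude no Nash equilibrium is $C$-stable, and since $C$-stability requires Nash-equilibrium structure, no allocation is $C$-stable.

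The main obstacle I anticipate is step~(2)--(4): simultaneously designing $C$ so that it is laminar \emph{and} so that its coalitions suffice to destabilise every one of the combinatorially many Nash configurations. A laminar family is quite restrictive — coalitions must nest or be disjoint — which sharply limits which swap-deviations are available, so the coalitions must be chosen with great care to cover all cases at once. I expect this to require a delicate interplay between the specific numerical parameters ($53$, $8$, $7$, the $1000$ copies) and the nesting constraints, and verifying that the constructed deviations are Pareto-improving (weak for all, strict for one) in every branch of the case analysis will be the calculation-heavy heart of the proof. This is presumably why the authors describe the example as ``an intricate one.''
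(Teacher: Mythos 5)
There is a genuine gap: your write-up is a proof plan, not a proof, and the part you defer --- steps (2)--(4), the explicit laminar structure $C$ and the deviations that destabilise \emph{every} Nash equilibrium --- is precisely the entire content of the theorem. Worse, the hints you give about the intended design point in a direction that cannot work. You speak of ``small coalitions (pairs, or coalitions straddling one $x$, some $y$'s, and some $z$'s)'': first, coalitions are sets of \emph{agents}, not resources, and the assignment of agents to resources is exactly what varies over allocations, so a coalition cannot be ``engineered'' to sit on prescribed resources --- it must work against every Nash configuration simultaneously. Second, a laminar family built from singletons plus a partition into small coalitions can never yield a counterexample: by the theorem of Anshelevich et al.~\cite{Ref_ACH2013} quoted in the paper, every partition coalition structure admits a partition equilibrium that is also a Nash equilibrium, so such a $C$ always has a $C$-stable allocation. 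Any successful construction must therefore exploit genuine \emph{nesting}, and your proposal never identifies what that nesting should be.

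The paper's proof supplies exactly the two ingredients you are missing. It takes $C=\{c_1,\ldots,c_6\}\cup\mathcal{P}_{=1}(N)\cup\{N\}$, where $c_1,\ldots,c_6$ partition the agents into six blocks of size $2342$: the \emph{grand coalition} $N$ is used to force structural facts about any candidate stable allocation $a$ (namely $x\in L(a)$ and $\left\vert H(a)\cap M_y\right\vert\leq 7$, via global reshuffling deviations whose profitability rests on $\beta_z>f_x(q_x-2)$), and then a pigeonhole argument over the six large blocks produces one coalition $c$ that must contain at least $1159$ agents on high $z$-resources, at least two agents on every low resource in $M_x\cup M_y$ (else $c$ has a swap deviation exploiting $\beta_y>\beta_z$), and hence at least $2\cdot 994+1159=3147>2342$ members --- a contradiction. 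Your proposal contains neither the grand coalition (without which steps analogous to the paper's parts (1)--(2) have no deviating coalition to invoke) nor the coarse-partition-plus-counting mechanism that handles all Nash equilibria at once; the swap-type deviations you describe only address individual configurations and give no uniform argument. Until you exhibit a concrete laminar $C$ and verify the case analysis, the statement remains unproved.
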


\begin{proof}
In Example \ref{ex_noLE}, consider the following coalition structure:
$C=\left\{  c_{1},\ldots,c_{6}\right\}  \cup\mathcal{P}_{=1}(N)\cup\{N\}$ where the sets $c_{1},\ldots,c_{6}$ are disjoint and each has a cardinality
of $14052/6=2342$.

Note that $C$ is laminar. We prove the theorem by showing that no $C$-stable
allocation exists in Example \ref{ex_noLE}. By way of contradiction, suppose
that in Example \ref{ex_noLE} there exists an allocation $a$ which is $C$-stable.

Note that by Corollary \ref{cor_NE_numberofLowHigh}: $|L(a)|=1001$
($=53+8\times1000+7\times1000-14052$). And $|H(a)|=2001-1001=1000$. Since
$\mathcal{P}_{=1}(N)\subset C$, $a$ is a Nash equilibrium. Therefore, using
Corollary \ref{cor_NE_numberofLowHigh}, at allocation $a$ there are 1001 low
resources and 1000 high resources.

\bigskip

We divide the proof into six parts:

\bigskip

\textbf{(1)} We show that $x\in L\left(  a\right)  $.

\bigskip

By way of contradiction, suppose that $x\in H\left(  a\right)  $. Then, in
$M_{y}\cup M_{z}$, there are 1001 resources that are low. Let $i,i^{\prime}$
be two of them ($i\neq i^{\prime}$). Consider the agents $a_{i}\cup
a_{i^{\prime}}$. Note that $\left\vert a_{i}\cup a_{i^{\prime}}\right\vert
=q_{i}+q_{i^{\prime}}-2\leq14$. Let $N_{1}\cup N_{2}\subset a_{x}$ be such
that $N_{1}$ and $N_{2}$ are disjoint, $\left\vert N_{1}\right\vert =q_{i}$,
and $\left\vert N_{2}\right\vert =q_{i^{\prime}}$. We define allocation
$a^{\prime}$ from $a$ as follows.

\begin{itemize}
\item[-] Remove the agents in $N_{1}\cup N_{2}\cup a_{i}\cup a_{i^{\prime}}$ from their assigned resources.

\item[-] Assign agents in $N_{1}$ to resource $i$, assign agents in $N_{2}$ to
resource $i^{\prime}$, and assign agents in $a_{i}\cup a_{i^{\prime}}$ to resource
$x$.

\item[ ] (The assignments of remaining agents are the same as before.)
\end{itemize}

At allocation $a^{\prime}$, the agents assigned to resource $x$ are now better
off (since $x$ is now assigned $q_{x}-2$ agents). All other agents are equally
well-off at the two allocations. But then $a$ is not $N$-stable, a
contradiction. Thus, $x\in L\left(  a\right)  $.

\bigskip

\textbf{(2)} We show that $\left\vert H\left(  a\right)  \cap M_{y}\right\vert
\leq7$. (Hence, $\left\vert H\left(  a\right)  \cap M_{z}\right\vert \geq993$.)

\bigskip

By \textbf{(1)}, we know that $x\in L\left(  a\right)  $. Then, at $a$, in
$M_{y}\cup M_{z}$ there are 1000 high resources and 1000 low resources. By way
of contradiction, suppose that $\left\vert H\left(  a\right)  \cap
M_{y}\right\vert \geq8$. This implies that $\left\vert L\left(  a\right)  \cap
M_{z}\right\vert \geq8$. We define allocation $a^{\prime}$ from $a$ as
follows. We pick 7 high resources in $M_{y}$: Wlog., let $y_{1},\cdots
,y_{7}\in H\left(  a\right)  \cap M_{y}$. We pick 8 low resources in $M_{z}$:
Wlog., let $z_{1},\cdots,z_{8}\in L\left(  a\right)  \cap M_{z}$. We pick 49
agents assigned to $x$ at $a$: Let $N_{x}\subset a_{x}$ be such that
$\left\vert N_{x}\right\vert =49$. Then:

\begin{itemize}
\item[-] Remove the agents in $N_{x}\cup a_{y_{1}}\cup\cdots\cup a_{y_{7}}\cup
a_{z_{1}}\cup\cdots\cup a_{z_{8}}$ from their assigned resources.

\item[-] Assign the 49 agents in $N_{x}$ to resources $y_{1},\cdots,y_{7}$
such that each resource is assigned 7 agents.

\item[-] Assign the 56 agents in $a_{y_{1}}\cup\cdots\cup a_{y_{7}}$ to
resources $z_{1},\cdots,z_{8}$ such that each resource is assigned 7 agents.

\item[-] Assign the 48 agents in $a_{z_{1}}\cup\cdots\cup a_{z_{8}}$ to
resource $x$.

\item[ ] (The assignments of remaining agents are the same as before.)
\end{itemize}

At allocation $a^{\prime}$, the agents assigned to resource $x$ are now better
off (since $x$ is now assigned $q_{x}-2$ agents). The agents assigned to
resources $y_{1},\cdots,y_{7}$ are also better off (because they are now
assigned to low resources for which the beta value is smaller). The agents
assigned to resources $z_{1},\cdots,z_{8}$ are equally well-off (because they
are assigned to high resources at both $a$ and $a^{\prime}$). The agents
assigned to remaining resources are equally well-off. But then $a$ is not
$N$-stable, a contradiction. Thus, $\left\vert H\left(  a\right)  \cap
M_{y}\right\vert \leq7$. Hence, we also have $\left\vert H\left(  a\right)
\cap M_{z}\right\vert \geq993$.

\bigskip

\textbf{(3)} We show that there exists $c\in\left\{  c_{1},\ldots
,c_{6}\right\}  $ such that there are at least 1159 agents in $c$ which are
assigned to resources in $H\left(  a\right)  \cap M_{z}$ at allocation $a$.

\bigskip

Above, by \textbf{(2)}, at $a$ there are at least 993 high resources in
$M_{z}$. Since each of them is assigned 7 agents, at $a$ the number of agents
assigned to high resources in $M_{z}$ is at least $993\times7=6951$. But then
by the generalized pigeonhole principle, there is a coalition $c\in\left\{
c_{1},\ldots,c_{6}\right\}  $ such that at $a$ the number of agents in $c$
that are assigned to high resources in $M_{z}$ is at least $\left\lceil
\frac{6951}{6}\right\rceil =1159$.

\bigskip

\textbf{(4)} Let $c\in\left\{  c_{1},\ldots,c_{6}\right\}  $ be as described
in \textbf{(3)}. We show that there exists $z\in H\left(  a\right)  \cap
M_{z}$ such that there are at least two agents in $c$ that are assigned to $z$
at allocation $a$.

\bigskip

Note that at $a$ the number of high resources in $M_{z}$ is at most 1000
(because $\left\vert M_{z}\right\vert =1000$). By \textbf{(3)} we also know
that there are at least 1159 agents in $c$ which are assigned to high
resources in $M_{z}$ at allocation $a$. But then, by the pigeonhole principle,
there exists $z\in H\left(  a\right)  \cap M_{z}$ such that there are at least
two agents in $c$ that are assigned to $z$ at allocation $a$.

\bigskip

\textbf{(5)} Let $c\in\left\{  c_{1},\ldots,c_{6}\right\}  $ be as described
in \textbf{(3)}. We show that for each resource $y\in L\left(  a\right)
\cap\left(  M_{x}\cup M_{y}\right)  $, there are at least two agents in $c$
that are assigned to $y$ at allocation $a$.

\bigskip

By \textbf{(4)} there exists $z\in H\left(  a\right)  \cap M_{z}$ such that
there are at least two agents in $c$ that are assigned to $z$ at allocation
$a$. Thus, let $j,j^{\prime}\in c$ be such that $j\neq j^{\prime}$ and at $a$
the agents $j$ and $j^{\prime}$ are assigned to resource $z$.

By way of contradiction, suppose that there exists $y\in L\left(  a\right)
\cap\left(  M_{x}\cup M_{y}\right)  $ such that $\left\vert c\cap
a_{y}\right\vert \leq1$.

Suppose that $\left\vert c\cap a_{y}\right\vert =0$. We define allocation
$a^{\prime}$ from $a$ as follows: Agent $j$ is removed from resource $z$ and
then assigned to resource $y$. It is clear that at $a^{\prime}$ coalition $c$
is better off. But then $a$ is not $C$-stable, a contradiction. Thus,
$\left\vert c\cap a_{y}\right\vert \neq0$.

Suppose that $\left\vert c\cap a_{y}\right\vert =1$. Let $\widetilde{j}$ be
the agent in $c\cap a_{y}$. We define allocation $a^{\prime}$ from $a$ as
follows: Agents $j$ and $j^{\prime}$ are removed from resource $z$ and then
assigned to resource $y$, and agent $\widetilde{j}$ is removed from resource
$y$ and then assigned to resource $z$. Note that at $a^{\prime}$ the agents
$j$ and $j^{\prime}$ are equally well-off (they are still assigned to high
resources) and the agents in $c$ that are assigned to $z$ ($\widetilde{j}$ and
perhaps some other agents) are better off (because $z$ is now a low resource,
and the beta value for $z$ is smaller than the beta value for $y$). The
remaining agents in $c$ are equally well-off. But then $a$ is not $C$-stable,
a contradiction. Thus, $\left\vert c\cap a_{y}\right\vert \neq1$. Therefore, $\left\vert c\cap a_{y}\right\vert \geq2$.

\bigskip

\textbf{(6)} We conclude the proof as follows: Let $c$ be as described in
\textbf{(3)}. By \textbf{(1)} and \textbf{(2)}, there are at least 994
resources in $L(a)\cap\left(  M_{x}\cup M_{y}\right)  $. By \textbf{(5)}, the
number of agents in coalition $c$ that are assigned to resources in
$L(a)\cap\left(  M_{x}\cup M_{y}\right)  $ is at least $2\times994=1988$ at
allocation $a$. By \textbf{(3)}, there are at least 1159 agents in $c$ which
are assigned to resources in $H\left(  a\right)  \cap M_{z}$ at allocation
$a$. But then we get $\left\vert c\right\vert \geq1988+1159=3147$. This
contradicts the fact that $\left\vert c\right\vert =2342$.
\end{proof}

\begin{corollary}
\label{cor:NashParetoPartition}
In an RSG, there may not exist a Pareto efficient allocation $a$ that is both a partition equilibrium and a Nash equilibrium.
\end{corollary}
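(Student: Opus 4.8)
The plan is to read Corollary \ref{cor:NashParetoPartition} off directly from Theorem \ref{teo_noLE} by decoding the three-part structure of the coalition family $C$ used in its proof. The witnessing family there is $C=\{c_{1},\ldots,c_{6}\}\cup\mathcal{P}_{=1}(N)\cup\{N\}$, and I would argue that $C$-stability is equivalent, component by component, to the three properties named in the corollary, so that the non-existence result transfers verbatim.

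First I would observe that stability against the singleton coalitions is, by definition, exactly the Nash equilibrium condition, so $\mathcal{P}_{=1}(N)$-stability is the same as $a$ being a Nash equilibrium. Second, I would check that $\{c_{1},\ldots,c_{6}\}$ is a genuine partition of $N$: the sets are pairwise disjoint by construction and $6\times 2342 = 14052 = n$, so they cover $N$. Hence stability against each $c_{i}$ is precisely the statement that $a$ is a partition equilibrium for this particular partition coalition structure. Third, and this is the step worth spelling out, I would note that $\{N\}$-stability coincides with Pareto efficiency of the allocation. Since the grand coalition $N=c$ can take any deviation $(c_{i})_{i=1}^{m}$, and because $a_{i}\subseteq N$ forces $(a_{i}\smallsetminus N)\cup c_{i}=c_{i}$, the allocation $a\circ(c_{i})_{i=1}^{m}$ ranges over all of $\mathcal{A}$. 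Thus a profitable deviation for $N$ at $a$ is exactly an allocation that makes every agent weakly better off and at least one strictly better, i.e.\ an allocation that Pareto-dominates $a$; so $a$ is $\{N\}$-stable if and only if $a$ is Pareto efficient.

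Combining these three equivalences, an allocation is $C$-stable if and only if it is simultaneously a Nash equilibrium, a partition equilibrium (for the partition $\{c_{1},\ldots,c_{6}\}$), and Pareto efficient. Theorem \ref{teo_noLE} establishes that no $C$-stable allocation exists in the RSG of Example \ref{ex_noLE}, so in that game no allocation enjoys all three properties at once, which is exactly the assertion of the corollary.

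I do not expect a genuine obstacle here, as the substantive work is carried entirely by Theorem \ref{teo_noLE}; the only care needed is in the bookkeeping of the three equivalences, in particular confirming that $N$-stability captures \emph{full} Pareto efficiency rather than something weaker, and that the six coalitions really do partition the agent set. No new construction is required: the same example and the same partition that witness Theorem \ref{teo_noLE} serve as the witness for the corollary.
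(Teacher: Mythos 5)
Your proposal is correct and follows essentially the same route as the paper's own proof: both read the corollary off Theorem \ref{teo_noLE} by identifying $\mathcal{P}_{=1}(N)$-stability with the Nash condition, $\{c_{1},\ldots,c_{6}\}$-stability with partition equilibrium, and $\{N\}$-stability with Pareto efficiency. Your write-up is in fact slightly more careful than the paper's, since you explicitly verify that the six coalitions cover $N$ and that grand-coalition stability is equivalent to full Pareto efficiency, points the paper leaves implicit.
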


\begin{proof}
In proving Theorem \ref{teo_noLE}, we work with the coalition structure
$C=\left\{  c_{1},\ldots,c_{6}\right\}  \cup\mathcal{P}_{=1}(N)\cup\{N\}$. Here, the part $\left\{  c_{1},\ldots,c_{6}\right\}$
is a partition of the agents. The part $\mathcal{P}_{=1}(N)$ is the set of singleton coalitions. And the part $\{N\}$ is the grand coalition. If every RSG admitted a Pareto efficient allocation $a$ that is both a partition equilibrium and a Nash equilibrium, the RSG instance used in the proof of Theorem \ref{teo_noLE} would be $c$-stable for every coalition $c \in C$.
\end{proof}

In light of our findings in Section \ref{sec:equinot}, Theorem \ref{teo_noLE} also has the
following corollary:

\begin{corollary}
\label{cor:ContandCentEquDoesNotExist}
In an RSG, the existence of a contiguous equilibrium, or of a centralized equilibrium, is also not guaranteed.
\end{corollary}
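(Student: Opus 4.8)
The plan is to leverage the containment of coalition-structure classes established in Theorem~\ref{teo_mostrefined_SSE}, which lets me transfer the non-existence result of Theorem~\ref{teo_noLE} upward along the generalization chain without building any new example. Recall that Theorem~\ref{teo_noLE} already exhibits a concrete RSG (namely Example~\ref{ex_noLE}) together with a specific laminar coalition structure $C=\left\{c_{1},\ldots,c_{6}\right\}\cup\mathcal{P}_{=1}(N)\cup\{N\}$ for which no allocation is $C$-stable. The key observation is that this single witness already settles both remaining cases, so the proof should be a short deduction rather than a fresh construction.

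First I would invoke Theorem~\ref{teo_mostrefined_SSE}, which gives the inclusions $\mathcal{C}^{le}\subseteq\mathcal{C}^{coe}\subseteq\mathcal{C}^{cee}$. Since the coalition structure $C$ above lies in $\mathcal{C}^{le}$, it therefore also lies in $\mathcal{C}^{coe}$ and in $\mathcal{C}^{cee}$; that is, the \emph{same} $C$ is simultaneously a contiguous and a centralized coalition structure. Next I would note that the property ``no $C$-stable allocation exists'' is a statement about the pair consisting of the RSG of Example~\ref{ex_noLE} and the structure $C$ alone, and is thus independent of which equilibrium class we regard $C$ as belonging to. Consequently this same pair witnesses that neither a contiguous equilibrium nor a centralized equilibrium is guaranteed to exist, which is exactly the claim.

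There is essentially no genuine obstacle here beyond correctly reading off the \emph{direction} of the generalization. Non-existence results propagate upward (from the more restrictive notions to the more general ones), which is the opposite direction from the way existence results propagate. Because a contiguous structure requires the existence of \emph{some} compatible path order and a centralized structure requires \emph{some} compatible planar embedding, one might at first worry that the particular $C$ must be re-verified against those geometric definitions; but Theorem~\ref{teo_mostrefined_SSE} has already discharged precisely this verification in proving the class inclusions $\mathcal{C}^{le}\subseteq\mathcal{C}^{coe}\subseteq\mathcal{C}^{cee}$, so no further embedding argument is needed and the corollary follows immediately.
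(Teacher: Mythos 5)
Your proof is correct and is exactly the paper's own (implicit) argument: the paper derives this corollary from Theorem~\ref{teo_noLE} together with the inclusions $\mathcal{C}^{le}\subseteq\mathcal{C}^{coe}\subseteq\mathcal{C}^{cee}$ of Theorem~\ref{teo_mostrefined_SSE}, so the laminar witness structure $C$ from Example~\ref{ex_noLE} serves simultaneously as a contiguous and a centralized counterexample. You also correctly note the key subtlety that non-existence transfers upward along the generalization chain (the opposite direction from existence results), which is precisely why no new construction is needed.
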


%Feldman and Tennenholtz \cite{Ref_FT2009} showed that in RSGs a super strong equilibrium is not guaranteed to exist. Above, we considered less restrictive notions of equilibrium. Alas, as shown in Theorem \ref{teo_noLE}, an existence result cannot be obtained even for these less restrictive notions of equilibrium. Feldman and Tennenholtz \cite{Ref_FT2009} showed their negative result in an RSG where there are only two identical resources. A natural question is then whether or not existence results can be obtained for these less restrictive notions of equilibrium: (i) when resources are identical, and/or (ii) when the number of resources is two. We elaborate on this question in the rest of the paper.

\subsection{Existence and Non-Existence Results for Contiguous and Centralized Equilibrium}

In this section we present several existence and nonexistence results for contiguous and centralized equilibrium notions. We first prove that a contiguous equilibrium exists when the resources are identical.

\begin{theorem}
\label{teo_ident_yesCoE}
In an identical-resource RSG, for any given contiguous coalition structure $C\in\mathcal{C}^{coe}$, there exists a $C$-stable allocation. That is, in an RSG with identical resources, there always exists a contiguous equilibrium.
\end{theorem}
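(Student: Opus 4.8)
The plan is to produce one explicitly constructed Nash equilibrium allocation that is simultaneously stable against every coalition of $C$, exploiting two facts: that the coalitions of a contiguous structure are intervals of a common path, and that identical resources make the Nash structure completely rigid. I would first record the shape of Nash equilibria when the resources are identical. Writing $f$ for the common cost function, all resources share a single quota $q$ and are Type~1, so at a Nash equilibrium each resource carries load $q$ or $q-1$ (Theorem~\ref{thm:ExistenceNash}), high resources charge the common cost $\alpha=f(q)$ and low resources the strictly smaller common beta value $\beta=f(q-1)$, and by Corollary~\ref{cor_NE_numberofLowHigh} the counts $h:=|H(a)|$ and $\ell:=|L(a)|$ are the same at every Nash equilibrium. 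In particular, ``being better off'' here means nothing more than moving from a high to a low resource.

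The technical heart is a clean per-coalition stability criterion: at a Nash equilibrium $a$, a coalition $c$ has no profitable deviation if and only if its members are spread across the resources in a balanced way, namely
\[ \max_{i\in H(a)}|c\cap a_i|\ \le\ \min_{i\in L(a)}|c\cap a_i|+1 \]
(with the right-hand side read as $+\infty$ when $L(a)=\emptyset$). The argument is a counting one. In a deviation only members of $c$ move, so on each resource $i$ the number of frozen non-members $|a_i\setminus c|$ is fixed; a relocated member pays $\le\beta$ exactly when its resource keeps load $\le q-1$, which caps the number of $\beta$-paying members that $c$ can create on resource $i$ at $|c\cap a_i|$ when $i$ is low and at $|c\cap a_i|-1$ when $i$ is high. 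Since all $|c|$ members must be reseated without overloading any resource and the total spare capacity is exactly $\ell$, I would show by an exchange/packing argument that the maximum number of members that can be made to pay $\le\beta$ equals $|c|$ minus the sum of the $h$ smallest values of $|c\cap a_i|+\mathbf{1}[\,i\in L(a)\,]$. The displayed inequality is precisely the condition under which this maximum does not exceed the number of members already on low resources, and the same tightness (it forces $h$ fully loaded resources) simultaneously rules out any member dropping below $\beta$; hence violating the inequality is necessary and sufficient for a profitable deviation.

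Finally I would align the construction with $C$. Let $P:j_1-\cdots-j_n$ be the path witnessing that $C$ is contiguous, and assign agent $j_\tau$ to resource $((\tau-1)\bmod m)+1$, i.e.\ round-robin along $P$. This yields load $q$ on the first $h$ resources and $q-1$ on the rest, so by Theorem~\ref{thm:ExistenceNash} it is a Nash equilibrium and thus already stable against singletons. For any $c\in C$, $c$ is an interval of $P$, and the number of its agents in a fixed residue class modulo $m$ differs by at most one across the classes; hence $|c\cap a_i|$ takes only the two consecutive values $\lfloor|c|/m\rfloor$ and $\lceil|c|/m\rceil$ over all resources, giving $\max_i|c\cap a_i|\le\min_i|c\cap a_i|+1$ and a fortiori the stability criterion above. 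Therefore the round-robin allocation is $C$-stable, proving the theorem.

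I expect the main obstacle to be the stability criterion of the second step: correctly accounting for every Pareto-improving deviation---not merely a single agent's move, but also multi-agent ``chains'' and the possibility of a member reaching a cost below $\beta$---and establishing the maximum $\beta$-payer count through the packing argument. Once that criterion is in hand, the round-robin construction together with the elementary residue-balance observation makes $C$-stability essentially immediate.
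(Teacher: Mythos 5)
Your proposal is correct and follows essentially the same approach as the paper: the identical per-coalition stability criterion at a Nash equilibrium (every high resource holds at most one more member of $c$ than every low resource), established by a counting argument, followed by the identical round-robin assignment along the witnessing path, whose residue-class balance on intervals yields the criterion for every coalition in $C$. The only substantive differences are that you state the criterion as an equivalence where the paper proves only the sufficiency it needs, and that your packing formulation explicitly handles deviations that create loads below $q-1$ (costs below $\beta$), a case the paper's argument glosses over when it asserts that the deviated allocation is again a Nash equilibrium.
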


\begin{proof}
We proceed in two parts: First, in the identical resources setting we find a
sufficient condition for a Nash equilibrium allocation to be $c$-stable. Then,
using this condition, for $C\in\mathcal{C}^{coe}$, we construct a Nash
equilibrium allocation which is $C$-stable. Our construction is a simple one
and it helps us find a Nash equilibrium allocation which is $C$-stable in
linear time.

\bigskip

\textbf{(1)} Let $a$ be a Nash equilibrium allocation. Let $c$ be some
coalition of agents. We show that $a$ is $c$-stable if for every pair of
resources $i\in H(a)$ and $i^{\prime}\in L(a)$, $|a_{i}\cap c|\leq
|a_{i^{\prime}}\cap c|+1$.

\medskip

By way of contradiction, suppose that $c$ satisfies the above condition yet
$(c_{i})_{i=1}^{m}$ is a profitable deviation at $a$. Let $\overline{a}%
=a\circ(c_{i})_{i=1}^{m}$.

\medskip

Since resources are identical, their quotas and beta values are the same. Let
their quotas be $q$ and their beta values be $\beta$.

\medskip

Suppose that for resource $i$, $\left\vert \overline{a}_{i}\right\vert >q$.
Then, it is clear that $c_{i}\neq\emptyset$. Let $j\in c_{i}$. Then at
$\overline{a}$ agent $j$ incurs a cost greater than the minmaxcost $\alpha$.
Recall that at a Nash equilibrium allocation no agent incurs a cost greater
than $\alpha$. Then, $j$ is worse off at $\overline{a}$ than at $a$. This
contradicts that $(c_{i})_{i=1}^{m}$ is a profitable deviation at $a$. Thus,
for each resource $i$, we have $\left\vert \overline{a}_{i}\right\vert \leq
q_{i}$.

\medskip

Suppose that for each resource $i$, $\left\vert \overline{a}_{i}\right\vert
<q$. But then at $\overline{a}$, at each resource the cost incurred is less
than the minmaxcost $\alpha$, a contradiction. Hence, there exists a resource,
say $i$, such that $\left\vert \overline{a}_{i}\right\vert =q$. But then, the
theorem by Anshelevich et al. \cite{Ref_ACH2013} and its corollary (presented
above) show that: the allocation $\overline{a}$ is a Nash equilibrium;
$\left\vert H(a)\right\vert =\left\vert H(\overline{a})\right\vert $; and
$\left\vert L(a)\right\vert =\left\vert L(\overline{a})\right\vert $.

\medskip

Let $d_{X}\subseteq c$ denote the subset of coalition members that are
assigned to resources in $X\subseteq M$ at allocation $a$. Similarly, let
$\overline{d}_{X}\subseteq c$ denote the subset of coalition members that are
assigned to resources in $X\subseteq M$ at allocation $\overline{a}$.

\medskip

Since $(c_{i})_{i=1}^{m}$ is a profitable deviation at $a$, $\overline
{d}_{H(\overline{a})}\subseteq d_{H(a)}$. Also, note that if $\overline
{d}_{H(\overline{a})}=d_{H(a)}$, we also get that $\overline{d}_{L(\overline
{a})}=d_{L(a)}$. But then every coalition member is equally well-off at
allocations $a$ and $\overline{a}$. This contradicts that $(c_{i})_{i=1}^{m}$
is a profitable deviation at $a$. Hence, we must have $\overline
{d}_{H(\overline{a})}\subset d_{H(a)}$. Hence, $\left\vert d_{H(a)}\right\vert
-\left\vert \overline{d}_{H(\overline{a})}\right\vert >0$.

\medskip

Note that $\left\vert d_{H(a)}\right\vert =\left\vert d_{H(a)\cap
H(\overline{a})}\right\vert +\left\vert d_{H(a)\cap L(\overline{a}%
)}\right\vert $ and $\left\vert \overline{d}_{H(\overline{a})}\right\vert
=\left\vert \overline{d}_{H(a)\cap H(\overline{a})}\right\vert +\left\vert
\overline{d}_{L(a)\cap H(\overline{a})}\right\vert $. Also, note that for a
resource in $H(a)\cap H(\overline{a})$, the number of members of coalition $c$
assigned to it is the same at allocations $a$ and $\overline{a}$. Thus,
$\left\vert d_{H(a)\cap H(\overline{a})}\right\vert =\left\vert \overline
{d}_{H(a)\cap H(\overline{a})}\right\vert $. Thus, $\left\vert d_{H(a)}%
\right\vert -\left\vert \overline{d}_{H(\overline{a})}\right\vert =\left\vert
d_{H(a)\cap L(\overline{a})}\right\vert -\left\vert \overline{d}_{L(a)\cap
H(\overline{a})}\right\vert >0$. Hence, $\left\vert d_{H(a)\cap L(\overline
{a})}\right\vert >\left\vert \overline{d}_{L(a)\cap H(\overline{a}%
)}\right\vert $.

\medskip

Let $i^{\ast}\in L(a)$ be such that the value $|a_{i^{\ast}}\cap c|$ is
smallest. Let $|a_{i^{\ast}}\cap c|=s$. The condition above in \textbf{(1)}
implies that for each $i\in H(a)$, $|d_{i}|\leq s+1$. (Note that $d_{i}%
=a_{i}\cap c$.) Then, $\left\vert d_{H(a)\cap L(\overline{a})}\right\vert
\leq\left(  s+1\right)  \,\left\vert H(a)\cap L(\overline{a})\right\vert $.
Now consider a resource $i\in L(a)\cap H(\overline{a})$. By our choice of $s$
we know that $|d_{i}|\geq s$. (Note that $d_{i}=a_{i}\cap c$.) It is also
clear that $\left\vert \overline{d}_{i}\right\vert =\left\vert d_{i}%
\right\vert +1$ (because $i$ is a low resource at $a$ and it is a high
resource at $\overline{a}$). Then, $\left\vert \overline{d}_{i}\right\vert
\geq\left(  s+1\right)  $. Then, $\left\vert \overline{d}_{L(a)\cap
H(\overline{a})}\right\vert \geq\left(  s+1\right)  \,\left\vert L(a)\cap
H(\overline{a})\right\vert $. Above we also showed that $\left\vert
d_{H(a)\cap L(\overline{a})}\right\vert >\left\vert \overline{d}_{L(a)\cap
H(\overline{a})}\right\vert $. Then we get $\left\vert H(a)\cap L(\overline
{a})\right\vert >\left\vert L(a)\cap H(\overline{a})\right\vert $. Since
$\left\vert H(a)\right\vert =\left\vert H(a)\cap L(\overline{a})\right\vert
+\left\vert H(a)\cap H(\overline{a})\right\vert $ and $\left\vert
H(\overline{a})\right\vert =\left\vert L(a)\cap H(\overline{a})\right\vert
+\left\vert H(a)\cap H(\overline{a})\right\vert $, we obtain that
$H(a)>\left\vert H(\overline{a})\right\vert $. But above we showed that
$\left\vert H(a)\right\vert =\left\vert H(\overline{a})\right\vert $, a
contradiction. Therefore, allocation $a$ is $c$-stable.

\bigskip

\textbf{(2)} We show that for each $C\in\mathcal{C}^{coe}$, there exists an
allocation $a$ such that $a$ is a Nash equilibrium and the condition in
\textbf{(1)} is satisfied for every $c\in C$.

\medskip

Let $C\in\mathcal{C}^{coe}$. Let $P$ be a path in accordance with coalition
structure $C$. Wlog., let $P  = 1-2-\cdots-n$. To show \textbf{(2)} we will show that for $C$, there exists an allocation $a$ that satisfies the condition in \textbf{(1)}
for every $c\in C$. We will construct this allocation with Algorithm 1:

\begin{algorithm}
\begin{algorithmic}[1]
\State $j \leftarrow 1$
\For{$i = 1, \ldots, n$}
	\State \textbf{assign} \textit{agent} $i$ \textbf{to} \textit{resource} $j$
	\State $j \leftarrow (j \mod m) + 1 $
\EndFor
\end{algorithmic}
\caption{}
\end{algorithm}

Observe that Algorithm 1 places the agents one by one to the resources; and when it reaches the last resource, it rolls over to the first resource again. Therefore, the number of agents on each pair of resources will differ by at most one at the end of the algorithm. Since each resource's quota is the same, this means that the above algorithm constructs a Nash equilibrium allocation $a$ where $H(a) = \{1, \ldots, |H(a)|\}$ and $L(a) = \{|H(a)| + 1, \ldots, m\}$ due to Theorem \ref{thm:ExistenceNash}.

\medskip

Since for any pair of resources $i \in H(a)$ and $i' \in L(a)$ we have $i < i'$, before the algorithm assigns an agent to a low resource, it always assigns an agent to each high resource. Therefore, before the algorithm assigns a member of coalition $c \in C$ to a high resource $i \in H(a)$ for the first time, it might have been assigned another member of $c$ to a low resource $i' \in L(a)$ at most once.

\medskip

On the other hand, after the algorithm assigns a member of coalition $c \in C$ to a high resource $i \in H(a)$ for the first time, before it assigns another member of $c$ to $i$, it needs to assign a member of $c$ to all other resources, since the agents in $c$ are subsequently ordered under $P  = 1-2-\cdots-n$. Therefore, we have $|a_i \cap c| \leq |a_{i'} \cap c| + 1$ for any low resource $i' \in L(a)$.  This means that, at allocation $a$, for each coalition $c\in C$, the condition in \textbf{(1)} is satisfied.

%\medskip

%This completes our proof.\hfill
\end{proof}

In light of our findings in Section \ref{sec:equinot}, Theorem \ref{teo_ident_yesCoE} also has the
following corollary:

\begin{corollary}
\label{cor:iden_yesLE}
In an identical-resource RSG, for any given laminar coalition structure $C\in\mathcal{C}^{le}$, there exists a $C$-stable allocation. That is, in an RSG with identical resources, there always exists a laminar equilibrium.
\end{corollary}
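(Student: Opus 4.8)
The plan is to derive Corollary~\ref{cor:iden_yesLE} as a direct consequence of Theorem~\ref{teo_ident_yesCoE}, using the containment $\mathcal{C}^{le}\subseteq\mathcal{C}^{coe}$ established in Theorem~\ref{teo_mostrefined_SSE}. The entire argument rests on the observation that every laminar coalition structure is also a contiguous coalition structure, so existence of a contiguous equilibrium immediately yields existence of a laminar equilibrium with no extra work on the RSG side.

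First I would fix an identical-resource RSG and an arbitrary laminar coalition structure $C\in\mathcal{C}^{le}$. By Theorem~\ref{teo_mostrefined_SSE} we have $\mathcal{C}^{le}\subseteq\mathcal{C}^{coe}$, so $C$ is also a contiguous coalition structure, i.e.\ $C\in\mathcal{C}^{coe}$. Applying Theorem~\ref{teo_ident_yesCoE} to this very $C$ produces an allocation $a$ that is $C$-stable. Since $C$-stability is defined identically regardless of which domain $C$ happens to belong to (it just means $c$-stability for every $c\in C$), the allocation $a$ is exactly the $C$-stable allocation the corollary asks for, and it is by definition a laminar equilibrium. This completes the existence claim.

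The only substantive content beyond this one-line invocation is making sure the reader trusts the containment $\mathcal{C}^{le}\subseteq\mathcal{C}^{coe}$; but that is proved in Theorem~\ref{teo_mostrefined_SSE} (with the proof deferred to Appendix~A), so I would simply cite it. I expect there to be essentially no obstacle here, which is precisely why the result is phrased as a corollary rather than a theorem: the real work lies in Theorem~\ref{teo_ident_yesCoE}, whose proof constructs a Nash equilibrium via the round-robin Algorithm~1 and then verifies the balance condition $|a_i\cap c|\le|a_{i'}\cap c|+1$ for each viable coalition $c$. If anything warrants a sentence of care, it is confirming that the notion of $C$-stability and the underlying game are unchanged when we reinterpret a laminar $C$ as a contiguous $C$ — but this is immediate from the definitions in Section~\ref{sec:equinot}, since both notions quantify $c$-stability over the members of $C$ without reference to the structural class of $C$.
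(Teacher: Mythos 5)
Your proposal is correct and is exactly the paper's own argument: the paper presents Corollary~\ref{cor:iden_yesLE} as an immediate consequence of Theorem~\ref{teo_ident_yesCoE} ``in light of our findings in Section~\ref{sec:equinot},'' i.e., via the containment $\mathcal{C}^{le}\subseteq\mathcal{C}^{coe}$ from Theorem~\ref{teo_mostrefined_SSE}. Your added remark that $C$-stability does not depend on which structural class $C$ belongs to is a fine (if unnecessary) point of care, and nothing further is needed.
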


Nonetheless, we show that contiguous equilibrium may not exist when there are two nonidentical resources. We present this result below, whose proof is given in Appendix C.

\begin{theorem}
\label{teo_two_noCOE}In an RSG with two resources, for $C\in\mathcal{C}^{coe}%
$, it may be that no allocation is $C$-stable. That is, in an RSG a contiguous equilibrium is not guaranteed to exist even
when the number of resources is restricted to 2.
\end{theorem}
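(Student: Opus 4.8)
The plan is to exhibit a specific two-resource RSG together with a contiguous coalition structure $C\in\mathcal{C}^{coe}$ and show that no allocation is $C$-stable. Since any $C$-stable allocation must in particular be a Nash equilibrium (provided the singletons $\mathcal{P}_{=1}(N)\subseteq C$, which I would include), Theorem~\ref{thm:ExistenceNash} sharply constrains the candidates: with $M=\{1,2\}$ and both resources relevant, every Nash allocation is determined up to which agents sit on the high versus the low resource, and Corollary~\ref{cor_NE_numberofLowHigh} fixes the exact occupancies $|a_1|,|a_2|$. So the search space of potential $C$-stable allocations is essentially just the ways of distributing a fixed number of agents onto the two resources subject to the Nash quotas. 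The contiguous structure $C$ should be chosen so that, for \emph{every} such Nash allocation, some contiguous coalition $c\in C$ has a profitable Pareto deviation.

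First I would pick cost functions making both resources Type~1 with distinct beta values (say $\beta_1<\beta_2$), so that being on the low copy of the cheaper resource is strictly preferable to being on a high resource, while both high resources cost $\alpha$. This gives coalition members a lever: a contiguous coalition can swap two of its members between resources to move someone from a high slot into the cheaper low slot without harming anyone, exactly the mechanism used in part~\textbf{(5)} of the proof of Theorem~\ref{teo_noLE}. Next I would lay the agents on a line $1-2-\cdots-n$ and choose the viable contiguous coalitions (intervals of this line) so that whatever the Nash split is, at least one interval contains an imbalance of the form $|a_{\text{high}}\cap c|\geq |a_{\text{low}}\cap c|+2$, which by the contrapositive of the sufficiency condition analysed in part~\textbf{(1)} of Theorem~\ref{teo_ident_yesCoE} signals a profitable intra-coalition rebalancing. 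The contiguity constraint is what makes this delicate: unlike partitions, the viable coalitions must be consecutive along a single line, so I cannot freely group the agents that happen to land on the wrong resource.

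The main obstacle will be the interaction between the linear order and the freedom in the Nash allocation. A contiguous coalition structure is far more permissive than a partition (by Theorem~\ref{teo_mostrefined_SSE}, $\mathcal{C}^{pe}\subset\mathcal{C}^{coe}$), so $C$ may contain many overlapping intervals, and I must rule out \emph{every} Nash allocation simultaneously. The construction therefore has to pin down the positions tightly: I would use a fixed menu of intervals (for instance nested or sliding windows along $1-2-\cdots-n$) and argue by a counting/pigeonhole argument that no assignment of agents to the two resources can keep all of these intervals balanced at once, since balancing one window forces an imbalance in an overlapping window. Verifying that the candidate deviation is genuinely Pareto-profitable — weakly better for all members of $c$ and strictly better for at least one — and that it respects the quota structure (so the deviating allocation does not push anyone above cost $\alpha$) is the routine but essential bookkeeping I would complete last, mirroring the deviation analysis in parts~\textbf{(1)} and~\textbf{(5)} of the earlier proofs. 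The deferral of this proof to Appendix~C suggests the bookkeeping is where the real length lies, while the conceptual core is the pigeonhole argument forcing an unbalanced interval under every Nash split.
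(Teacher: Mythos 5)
There is a genuine gap in your central mechanism. Your plan hinges on the claim that, in every Nash allocation, finding a contiguous coalition $c$ with $|a_{\text{high}}\cap c|\geq |a_{\text{low}}\cap c|+2$ yields a profitable deviation, by ``the contrapositive of the sufficiency condition'' in part \textbf{(1)} of Theorem~\ref{teo_ident_yesCoE}. That inference is invalid on two counts. First, the condition there is only \emph{sufficient} for stability, so its violation does not by itself produce a deviation. Second, and fatally, the correct two-resource characterization (Lemma~\ref{lem:characterization}) shows the imbalance mechanism only works when the high resource has the \emph{smaller} beta value: condition C3 applies only when $\beta_{\text{high}}<\beta_{\text{low}}$. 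Since you must take distinct betas (with identical resources a contiguous equilibrium always exists, by Theorem~\ref{teo_ident_yesCoE}), one of the two Nash occupancy patterns puts the larger-beta resource on top, and there a coalition with, say, three members on the high resource and one on the low resource is perfectly stable: any deviation that relieves the high resource turns the low resource high and strictly hurts the coalition member sitting on it ($\beta_{\text{low}}\to\alpha$), while balanced swaps hurt whoever moves from the cheap low slot to the expensive one. In that case the \emph{only} source of instability is condition C1 --- a coalition with at least two members on the high resource and \emph{none} on the low one --- which is a containment condition, not an imbalance condition.

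This is precisely the asymmetry the paper's construction is built around: with $N=\{1,\ldots,6\}$, quotas $q_1=3$, $q_2=4$, $\beta_1<\beta_2$, it uses the pairs $\{1,2\},\{3,4\},\{5,6\}$ to kill the Nash case where resource $2$ (larger beta) is high --- any $4$ agents among $6$ must swallow a whole pair, violating C1 --- and only then, in the remaining case where resource $1$ (smaller beta) is high with one agent from each pair, uses the triples $\{1,2,3\},\{4,5,6\}$ to produce the $2$-versus-$1$ imbalance that violates C3. Your single uniform imbalance criterion cannot rule out the first case, so the pigeonhole construction you envisage would leave those Nash allocations standing and the non-existence conclusion would not follow. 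Beyond this, the proposal gives no concrete instance (no $n$, cost functions, or interval family) and the counting argument is not carried out, so even the salvageable part remains a sketch; but the case-split driven by which resource is high, with two different deviation mechanisms, is the missing idea you would need to add.
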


We finally prove that a centralized equilibrium may not exist even for the two identical resources case. Note that what makes this result interesting is that even though centralized coalition structures contain $O(n^2)$ viable coalitions (instead of $2^n - 1$ coalitions), equilibrium may not exist in the two-identical resources setting, i.e., it strengths the non-existence result of super strong equilibrium given in Example \ref{ex_noSSE}. We present this result below, whose proof is given in Appendix C.

\begin{theorem}
\label{teo_twoident_noCEE}In an RSG with two identical resources, for
$C\in\mathcal{C}^{cee}$, it may be that no allocation is $C$-stable. That is, in an RSG a centralized equilibrium is not guaranteed to exist even
under the restriction that there are two identical resources.
\end{theorem}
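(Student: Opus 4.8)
The plan is to exhibit a small RSG with two identical resources together with a concrete centralized coalition structure $C \in \mathcal{C}^{cee}$, and then argue that no allocation can be $C$-stable. Since the resources are identical, by Theorem \ref{thm:ExistenceNash} and Corollary \ref{cor_NE_numberofLowHigh} every Nash equilibrium splits the agents as evenly as possible across the two resources; and since $\mathcal{P}_{=1}(N) \subset C$ forces $C$-stability to imply Nash, I would begin by pinning down exactly what the candidate allocations look like. The natural test case is to revisit the three-agent instance of Example \ref{ex_noSSE} (two identical resources, $f_i(q) = q$, $N = \{1,2,3\}$), where the super strong equilibrium already fails: there the only Nash allocations put two agents on one resource and one on the other, and the pair sharing a resource forms a profitable deviating coalition. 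The entire difficulty is that this destabilizing pair must itself be a \emph{viable} coalition in $C$, i.e.\ it must be realizable as a circle containing exactly those two agents (with one of them at the center). So the real work is to choose the planar embedding $\phi$ and the radii $\psi$ so that \emph{whichever} pair ends up sharing a resource in \emph{any} candidate Nash allocation is guaranteed to be a viable coalition.

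First I would fix the number of agents (likely slightly more than three, to give the geometry enough room) so that every Nash-equilibrium allocation forces some specific pair of agents onto the same resource, and so that the set of pairs that could conceivably be ``the crowded pair'' is controlled. Next I would place the agents at explicit coordinates in $\mathbb{R}^2$ and, for each pair $\{j,j'\}$ that I need to be destabilizing, add to $C$ a centralized coalition realized by a circle centered at $\phi(j)$ (or $\phi(j')$) with radius chosen to enclose exactly $j$ and $j'$ and no other agent. I would verify the centralized property pointwise: for each such circle, every intended member lies inside (boundary included) and every non-member lies strictly outside, which is just a finite set of distance comparisons. Then I would append $\mathcal{P}_{=1}(N)$ to $C$ (these are trivially realizable as tiny circles) so that $C$-stability entails Nash equilibrium.

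The core argument then runs: suppose $a$ is $C$-stable. Since $\mathcal{P}_{=1}(N) \subseteq C$, allocation $a$ is a Nash equilibrium, so by the identical-resource structure it distributes the agents nearly evenly and in particular some resource is ``crowded'' by more agents than its counterpart. By the combinatorial choice of coordinates, this crowding forces a specific pair $\{j,j'\}$ to share a resource. Exactly as in Example \ref{ex_noSSE}, that pair admits a profitable deviation (move one of them to the less-loaded resource, making that agent weakly better off and the other strictly better off). Because $\{j,j'\}$ was built to be a viable coalition in $C$, this contradicts $c$-stability for $c = \{j,j'\}$, and hence no $C$-stable allocation exists.

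The main obstacle I anticipate is the geometric realizability: the centralized model demands that each destabilizing pair be cut out by a single circle centered at an agent, and as the agent count grows the requirement that such a circle contain \emph{exactly} the two intended agents (and exclude all others) becomes a genuine constraint on the point configuration. The crux is therefore to design coordinates so that the ``unavoidable crowded pair'' across \emph{all} Nash allocations can be simultaneously enclosed by legitimate centered circles while keeping all unwanted agents outside; getting a single clean configuration that handles every case at once, rather than a patchwork, is where the care is needed. I would handle this by keeping the number of agents as small as the combinatorics allow and exploiting the symmetry of identical resources to minimize the number of distinct pairs I must realize.
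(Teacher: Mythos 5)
Your plan has a genuine, and in fact fatal, gap: you propose to use only \emph{pairs} as the destabilizing coalitions (circles ``chosen to enclose exactly $j$ and $j'$ and no other agent''), and a pairs-only centralized structure can \emph{never} kill all Nash allocations, no matter how carefully you choose coordinates. The reason is geometric. A pair $\{i,j\}$ is realizable as a centralized coalition if and only if some circle centered at $i$ (or at $j$) contains the other member and excludes everyone else, i.e., if and only if $j$ is the strict nearest neighbor of $i$ or $i$ is the strict nearest neighbor of $j$. Each agent has at most one strict nearest neighbor, and a directed cycle of nearest-neighbor arrows $v_1 \to v_2 \to \cdots \to v_m \to v_1$ with $m \geq 3$ would force $d(v_1,v_2) > d(v_2,v_3) > \cdots > d(v_m,v_1) > d(v_1,v_2)$, a contradiction; hence the graph of viable pairs is always a forest, and in particular bipartite. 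Take $\lceil n/2 \rceil$ agents from the larger side of a bipartition and put them on one resource: this is a Nash allocation (with identical resources \emph{every} balanced split is one) in which no viable pair lies inside the crowded resource. A pair split across the two resources, or contained in the less-loaded resource, has no profitable deviation when the resources are identical, so this allocation is $C$-stable and your construction fails for every $n$ and every embedding. Relatedly, your phrase ``the combinatorial choice of coordinates\ldots forces a specific pair to share a resource'' conflates two things: the coordinates only determine which coalitions are \emph{viable}; they have no influence whatsoever on which splits are Nash equilibria.

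The paper escapes exactly this trap by using coalitions of size larger than two. With $N=\{1,\ldots,5\}$ it includes the four-agent coalitions $\{1,2,3,5\}$ and $\{2,3,4,5\}$, each of which has a profitable ``rotation'' deviation whenever agent $1$ (respectively agent $4$) sits on the two-agent resource: e.g., if $a_1=\{1,2\}$ and $a_2=\{3,4,5\}$, coalition $\{2,3,4,5\}$ moves agent $2$ to resource $2$ and agents $3,4$ to resource $1$, leaving $2,3,4$ with unchanged costs and making $5$ strictly better off. These two coalitions force agents $1$ and $4$ onto the three-agent resource, so the crowded triple must be $\{1,4,x\}$ with $x\in\{2,3,5\}$, and then only the three pairs $\{1,2\}$, $\{3,4\}$, $\{4,5\}$ are needed --- a forest, hence realizable by a concrete planar representation, which the paper exhibits. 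This is the missing ingredient in your proposal: without large coalitions whose deviations restrict where particular agents can sit, the pair-realizability constraint you correctly identified as ``the crux'' is not merely delicate but insurmountable.
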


\newpage

% Bibliography
\bibliographystyle{ACM-Reference-Format}
\bibliography{sample-bibliography}

\newpage

\appendix
\section*{Appendix A}

\begin{proof}[Proof of Theorem \ref{teo_mostrefined_SSE}]

We prove the theorem in six parts.

\bigskip

\textbf{(1)} $\mathcal{C}^{ne}\subseteq\mathcal{C}^{pe}$ since $\mathcal{P}_{=1}(N)$ is a partition over $N$. Let $C \in \mathcal{C}^{pe}$. Then for any two distinct coalitions $c_1, c_2 \in C$, we have $c_1 \cap c_2 = \emptyset$. Thus, $\mathcal{C}^{pe}\subseteq\mathcal{C}^{le}$. We now show that $\mathcal{C}^{le}\subseteq\mathcal{C}^{coe}$ by using strong
induction on the size of the set $N$.

\bigskip

The base case (when $|N|=1$) is trivial.
Thus, we move to the inductive step: Suppose that the relation $\mathcal{C}%
^{le}\subseteq\mathcal{C}^{coe}$ holds when $|N|\leq s$, where $s\geq1$. We
need to show that the relation $\mathcal{C}^{le}\subseteq\mathcal{C}^{coe}$
holds when $|N|=s+1$. Hence, suppose that $|N|=s+1$.

\bigskip

Consider an arbitrary $C\in\mathcal{C}^{le}$. We need to show that
$C\in\mathcal{C}^{coe}$. To do that, we need to find a path $P$ in accordance
with coalition structure $C$. If $C=\emptyset$ then any path works. Thus,
suppose that $C\neq\emptyset$. Also, note that for any path $P$ the agents in
$N$ are subsequently ordered. Thus, wlog. let $N\notin C$.

\bigskip

Since $C\neq\emptyset$ and $N\notin C$, there exists $c\in C$ such that
$c\subset N$ and for each $\widetilde{c}\in C$, $c\not \subset \widetilde{c}$.
Let $C^{1}=\{\widetilde{c}\in C|\widetilde{c}\subseteq c\}$ and $C^{2}%
=C\smallsetminus C^{1}$. Note that $|c|\leq s$, $|N\smallsetminus c|\leq s$,
and the coalition structures $C^{1}$ and $C^{2}$ are laminar. Then, by our
inductive hypothesis: There exist paths $P^{1}$ and $P^{2}$, whose sets of
vertices are $c$ and $N\smallsetminus c$ (in order), such that $P^{1}$ is in
accordance with $C^{1}$ and $P^{2}$ is in accordance with $C^{2}$. Now
consider the path $P:P^{1}-P^{2}$ (i.e., the agents in $c$ are ordered at the
beginning as in $P^{1}$, and then the agents in $N\smallsetminus c$ are
ordered at the end as in $P^{2}$). It is clear that $P$ is in accordance with
$C$. Thus, $C\in\mathcal{C}^{coe}$. Therefore, $\mathcal{C}^{le}%
\subseteq\mathcal{C}^{coe}$.
\bigskip

\textbf{(2)} We now show that $\mathcal{C}^{coe}\subseteq \mathcal{C}^{cee}$.

\bigskip

Let $C\in\mathcal{C}^{coe}$. All we need is to show that $C\in\mathcal{C}^{cee}$. Let $P$ be a
path in accordance with coalition structure $C$. Wlog., let $P=1-2-\cdots-n$.
To ease exposition, we will denote a coalition $c\in C$ by $[j^{\prime},j]$
where $j^{\prime}$ and $j$ are, in order, the smallest-index and the
largest-index agents in $c$.

\bigskip

To prove the desired result we first construct a list $L$ whose elements are
$\ast$'s and the agents in $N$. We begin with the empty ordered list $L=()$.
Then we expand $L$ by inserting in the list $\ast$'s and agents using the
following algorithm:

\bigskip

\textbf{For} $j = 1 \ldots n$:
\begin{enumerate}
\item Redefine $L$ by inserting $j$ at the end.

\item If there are no coalitions in $C$ of the form $\left[  j^{\prime
},j\right]  $ where $j^{\prime}\neq j$, proceed with the next iteration.

\item Otherwise, among coalitions of the form $\left[  j^{\prime},j\right]  $
where $j^{\prime}\neq j$, let the smallest value that $j^{\prime}$ takes be
$j^{\ast}$.

\item Let $r$ be the number of elements ($\ast$'s and agents) that come after
$j^{\ast}$ up to $j$ (including $j$).

\item Redefine $L$ by inserting $r$ consecutive $\ast$'s at the end, and then
proceed with the next iteration.
\end{enumerate}

\bigskip

We now show by induction that after the iteration for $j$, the list $L$ is
such that:

\begin{itemize}
\item[-] For each coalition $\left[  \overline{j},\widehat{j}\right]  \in C$
where $\overline{j}\neq\widehat{j}$ and $\widehat{j}\leq j$, the number of
consecutive $\ast$'s that succeed $\widehat{j}$ is at least as many as the
number of elements that succeed $\overline{j}$ up to agent $\widehat{j}$
(including $\widehat{j}$).
\end{itemize}

The base case, when $j=1$, is trivial: After the iteration for $j=1$ we obtain
that $L=(1)$ and the desired result holds vacuously. Hence, we move to the
inductive step.

\bigskip

Suppose that the above statement is true after the iteration for $j=s$ where
$s\geq1$. We need to show that the statement is true after the iteration for
$j=s+1$. Consider the iteration for $j=s+1$.

\bigskip

If there are no coalitions in $C$ of the form $\left[  j^{\prime},s+1\right]
$ where $j^{\prime}\neq s+1$, after the iteration the list $L$ remains
unchanged with the exception that agent $s+1$ has been inserted at the end.
The desired result then follows from the inductive hypothesis. (This is
because the part of the list that comes before agent $s+1$ did not change.)

\bigskip

Suppose that there are coalitions in $C$ of the form $\left[  j^{\prime
},s+1\right]  $ where $j^{\prime}\neq s+1$. Let $j^{\ast}$ and $r$ be as
defined in the algorithm. Then after the iteration the list $L$ remains
unchanged with the exception that agent $s+1$, followed by $r$ consecutive
$\ast$'s, are inserted at the end. For coalitions of the form $\left[
\overline{j},\widehat{j}\right]  \in C$ where $\widehat{j}<s+1$ the desired
result follows from the inductive hypothesis. For coalitions of the form
$\left[  \overline{j},s+1\right]  \in C$ the desired results follows by our
choices of $j^{\ast}$ and $r$. Thus, the list $L$ is as desired.

\bigskip

We now use the list $L$ produced by the above algorithm to show that
$C\in\mathcal{C}^{cee}$: Consider the following planar representation: In the
Cartesian space the agents $1,2,\cdots,n$ are subsequently positioned on a
line, and the distance between agents $j$ and $j+1$ is $1+$(the number of
$\ast$'s in the list $L$ between agents $j$ and $j+1$). For a coalition of the
form $\left[  j,j\right]  \in C$, draw a circle such that its radius is $1/2$,
and agent $j$ lies at the circle's center. For a coalition of the form
$[j^{\prime},j]\in C$ where $j^{\prime}\neq j$, draw a circle such that its
radius is the distance between agents $j^{\prime}$ and $j$, and agent
$j^{\prime}$ lies at the circle's center. By construction of $L$ this planar
representation is in accordance with coalition structure $C$. Therefore,
$C\in\mathcal{C}^{cee}$.

\bigskip

\textbf{(3)} We now show that $\mathcal{C}^{ne}\subset\mathcal{C}^{pe}\subset\mathcal{C}^{le}$ for $|N|\geq 2$. For
$|N|\geq2$, observe that:

\begin{itemize}
\item[(i)] $\left\{  \left\{  1,2\right\}  \right\}  \notin\mathcal{C}^{ne}$
but $\left\{  \left\{  1,2\right\}  \right\}  \in\mathcal{C}^{pe}$;

\item[(ii)] $\{\{1\},\{1,2\}\}\notin\mathcal{C}^{pe}$ but
$\{\{1\},\{1,2\}\}\in\mathcal{C}^{le}$.
\end{itemize}

\bigskip

Therefore, for $|N|\geq2$, $\mathcal{C}^{ne}\subset\mathcal{C}^{pe}%
\subset\mathcal{C}^{le}$.

\bigskip

\textbf{(4)} We now show that $\mathcal{C}^{le}\subset\mathcal{C}^{coe}$ for $|N|\geq 3$.

\bigskip

For $|N|\geq3$, observe that: $\{\{1,2\},\{2,3\}\}\notin\mathcal{C}^{le}$.
Note that the path $P:1-2-3$ is in accordance with this coalition structure.
Thus, $\{\{1,2\},\{2,3\}\}\in\mathcal{C}^{coe}$. Therefore, if $|N|\geq3$,
$\mathcal{C}^{le}\subset\mathcal{C}^{coe}$.

\bigskip

\textbf{(5)} We now show that $\mathcal{C}^{coe}\subset\mathcal{C}^{cee}$ for $|N|\geq 4$.

\bigskip

Let $|N|\geq4$. Let $C=\{\{1,2,3\},\{2,3,4\},\{3,4,1\},\{4,1,2\}\}$. To see
that $C\in\mathcal{C}^{cee}$, consider the following planar representation:
Agents $1,2,3,4$ lie at the four corners of a unit square. Each agent lies at
the center of a circle with radius 1. It is easy to verify that this planar
representation is in accordance with coalition structure $C$. Therefore,
$C\in\mathcal{C}^{cee}$.

\bigskip

Suppose that $C\in\mathcal{C}^{coe}$. Let $P$ be a path in accordance with
coalition structure $C$. Since $\{1,2,3\}\in C$ and $\{2,3,4\}\in C$, it must
be that under $P$ the agents $1,2,3,4$ are ordered next to one another. By
symmetry of $C$ w.r.t. agents we can assume wlog. that $P=\cdots
-1-2-3-4-\cdots$. But then under $P$ the agents in $\{3,4,1\}\in C$ are not
subsequently ordered, a contradiction. Thus, $C\notin\mathcal{C}^{coe}$.
Since $C\notin\mathcal{C}^{coe}$ but $C\in\mathcal{C}^{cee}$, we have
$\mathcal{C}^{coe}\subset\mathcal{C}^{cee}$ for $|N|\geq 4$.

\bigskip

\textbf{(6)} Notice that for each $C\in\mathcal{C}^{ne}\cup\mathcal{C}^{pe}\cup\mathcal{C}^{le}\cup\mathcal{C}^{coe}\cup\mathcal{C}^{cee}$, $C\subseteq\mathcal{P}_{\geq1}(N)$. This is because by definition any coalition structure $C$
is a subset of $\mathcal{P}_{\geq1}(N)$. We finally show that $\mathcal{P}_{\geq1}(N)\notin\mathcal{C}^{ne}\cup\mathcal{C}^{pe}\cup\mathcal{C}^{le}\cup\mathcal{C}^{coe}\cup\mathcal{C}^{cee}$ for $|N|\geq 3$. Suppose that $|N|\geq 3$. The observation that $\mathcal{P}%
_{\geq1}(N)\notin\mathcal{C}^{ne}\cup\mathcal{C}^{pe}\cup\mathcal{C}^{le}$ is
trivial and left for the reader.

\bigskip

Suppose that $\mathcal{P}_{\geq1}(N)\in\mathcal{C}^{coe}$. Then, there exists
a path $P$ in accordance with $\mathcal{P}_{\geq1}(N)$. By symmetry of
$\mathcal{P}_{\geq1}(N)$ w.r.t. agents we can assume wlog. that $P=1-2-\cdots
-n$. But then agents $1$ and $3$ are not subsequently ordered under $P$, a
contradiction. Thus, $\mathcal{P}_{\geq1}(N)\notin\mathcal{C}^{coe}$.

\bigskip

Suppose that $\mathcal{P}_{\geq1}(N)\in\mathcal{C}^{cee}$. Then, there exists
a planar representation $\left(  \phi,\psi\right)  $ in accordance with
$\mathcal{P}_{\geq1}(N)$. Under $\left(  \phi,\psi\right)  $, consider the
circles corresponding to coalitions $\{1,2\},\{2,3\},\{1,3\}$. It must be that
either a distinct agent lies at each of these three circles' center or there
is an agent that lies at the center of at least two circles. For this latter
case, wlog. suppose that agent $1$ lies at the center of the circles
corresponding to coalitions $\{1,2\}$ and $\{1,3\}$. Wlog., let $\psi
_{2}(\{1,2\})\geq\psi_{2}(\{1,3\})$. But then agent $3$ lies inside the circle
corresponding to coalition $\{1,2\}$, a contradiction. Thus, a distinct agent
lies at the center of each of these three circles.

\bigskip

Wlog., let agents $1,2,3$ lie at the centers of the circles corresponding to
coalitions $\{1,2\},\{2,3\},\{3,1\}$, respectively. For coalition $\{1,2\}$,
since $3$ lies outside of the circle corresponding to this coalition, we must
have $\psi_{2}(\{3,1\})>\psi_{2}(\{1,2\})$. Using similar arguments for
coalitions coalitions $\{2,3\}$ and $\{3,1\}$, we find that $\psi
_{2}(\{1,2\})>\psi_{2}(\{2,3\})$ and $\psi_{2}(\{2,3\})>\psi_{2}(\{3,1\})$.
But then we get $\psi_{2}(\{3,1\})>\psi_{2}(\{1,2\})>\psi_{2}(\{2,3\})>\psi
_{2}(\{3,1\})$, a contradiction. Therefore, $C\not \in \mathcal{C}^{cee}$.

\bigskip
This completes our proof.
\end{proof}

\section*{Appendix B}

We now prove that laminar equilibrium always exists in two-resource RSGs. We first present what we call \textquotedblleft the two-color theorem of
laminarity\textquotedblright (Theorem \ref{thm:2color}), the proof of which can be found in the Appendix. It lies at the heart of the proof of Theorem \ref{teo_two_yesLE}, the main result of this section. We believe that our two-color theorem may also be of
independent interest, in particular, in future studies on laminarity. In
simple terms it states that for any $N^{\prime}\subseteq N$ and any laminar
coalition structure $C$, the set $N^{\prime}$ can be partitioned into two
subsets of about equal size (i.e., into $N_{B}^{\prime}$ and $N_{W}^{\prime
}=N^{\prime}\smallsetminus N_{B}^{\prime}$ where $\left \vert \left \vert
N_{B}^{\prime}\right \vert -\left \vert N_{W}^{\prime}\right \vert \right \vert
\leq1$) such that for each $c\in C$, the set of coalition members in
$N^{\prime}$ (i.e., $N^{\prime}\cap c$) also becomes partitioned into two
subsets of about equal size (i.e., $\left \vert \left \vert N_{B}^{\prime}\cap
c\right \vert -\left \vert N_{W}^{\prime}\cap c\right \vert \right \vert \leq1$).
(The agents in $N_{B}^{\prime}$ and $N_{W}^{\prime}$ are referred to as
\textquotedblleft black agents\textquotedblright \ and \textquotedblleft white
agents,\textquotedblright \ respectively, and hence is the name of the theorem).

\begin{theorem}
\label{thm:2color}
\textbf{(the two-color theorem of laminarity)} Let $N^{\prime}\subseteq N$,
$\left \vert N^{\prime}\right \vert \geq1$. Let $k\geq1$ be such that
$\left \vert N^{\prime}\right \vert =2k-1$ or $\left \vert N^{\prime}\right \vert
=2k$. Then, for any laminar coalition structure $C$, the set $N^{\prime}$ can
be partitioned into the subsets $N_{B}^{\prime}$ and $N_{W}^{\prime}%
=N^{\prime}\smallsetminus N_{B}^{\prime}$ such that $\left \vert N_{B}^{\prime
}\right \vert =k$ and for each $c\in C$, $\left \vert \left \vert N_{B}^{\prime
}\cap c\right \vert -\left \vert N_{W}^{\prime}\cap c\right \vert \right \vert
\leq1$.
\end{theorem}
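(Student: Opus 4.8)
The plan is to reduce this seemingly two-dimensional balancing requirement to a one-dimensional \emph{alternating} coloring, by exploiting the fact, already established in Theorem~\ref{teo_mostrefined_SSE}, that every laminar coalition structure can be laid out on a line so that each of its coalitions occupies a contiguous block of agents. Once such a linear order is in hand, coloring the agents alternately black and white will automatically balance every coalition: an interval of even length receives equally many black and white agents, while an interval of odd length receives counts that differ by exactly one.

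Concretely, I would first apply the inclusion $\mathcal{C}^{le}\subseteq\mathcal{C}^{coe}$ from Theorem~\ref{teo_mostrefined_SSE} to obtain a path $P$ on the agents in which every $c\in C$ is subsequently ordered. Deleting from $P$ every agent outside $N^{\prime}$ and keeping the induced order yields a linear arrangement $p_1-p_2-\cdots-p_{|N^{\prime}|}$ of $N^{\prime}$. The key observation here is that restriction preserves contiguity: since each $c\in C$ occupies an interval of $P$, the agents of $N^{\prime}\cap c$ are precisely the surviving agents inside that interval, so they form a contiguous block $\{p_{l},\dots,p_{r}\}$ of the induced order (and the case $N^{\prime}\cap c=\emptyset$ is vacuous).

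Next I would set $N_{B}^{\prime}$ to be the agents at odd positions $p_{1},p_{3},p_{5},\dots$ and $N_{W}^{\prime}=N^{\prime}\setminus N_{B}^{\prime}$ the agents at even positions. A parity count gives $|N_{B}^{\prime}|=\lceil |N^{\prime}|/2\rceil$, which equals $k$ whether $|N^{\prime}|=2k$ or $|N^{\prime}|=2k-1$, as required. For a fixed $c\in C$ with $N^{\prime}\cap c=\{p_{l},\dots,p_{r}\}$ of length $\ell=r-l+1$, the quantities $|N_{B}^{\prime}\cap c|$ and $|N_{W}^{\prime}\cap c|$ count the odd and even positions inside $[l,r]$; these are equal when $\ell$ is even and differ by one when $\ell$ is odd, so $\bigl|\,|N_{B}^{\prime}\cap c|-|N_{W}^{\prime}\cap c|\,\bigr|\le 1$ in every case.

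The conceptual crux is the reduction in the first paragraph: recognizing that the contiguous-layout result already proved for laminar families converts the requirement of simultaneously balancing all nested coalitions into the triviality of balancing intervals under an alternating coloring. Given that reduction, the only points requiring care are the restriction step, namely verifying that contiguity survives the deletion of agents outside $N^{\prime}$, and the elementary parity bookkeeping for $|N_{B}^{\prime}|$ and for the interval counts, both of which are routine.
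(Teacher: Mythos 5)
Your proof is correct, and it takes a genuinely different route from the paper's. The paper proves Theorem~\ref{thm:2color} directly by induction on the levels of the laminar hierarchy: it fixes a top-down decomposition of $C$ into levels $C^1,\ldots,C^t$ (via the child/mother relation), starts from an arbitrary balanced split, and repeatedly repairs an unbalanced coalition at level $s+1$ by swapping a black agent in it with a white agent in a sibling coalition, arguing that such a swap never breaks balance at higher levels. Your argument instead outsources all the combinatorial work to the inclusion $\mathcal{C}^{le}\subseteq\mathcal{C}^{coe}$ from Theorem~\ref{teo_mostrefined_SSE} (proved in Appendix A, independently of the two-color theorem, so there is no circularity), and then observes that an alternating coloring along the induced order on $N^{\prime}$ balances every interval; your restriction step, showing that $N^{\prime}\cap c$ remains contiguous after deleting agents outside $N^{\prime}$, is the only point needing verification and you verify it correctly. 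What your approach buys: a much shorter, non-iterative proof (the paper's swap process also needs a termination argument, which it treats somewhat casually), a canonical coloring computable in linear time given the path, and in fact a stronger statement, since the alternating coloring balances \emph{every} contiguous block along the path, not merely the coalitions in $C$. What the paper's approach buys: it is self-contained within the laminar setting, operating directly on the nesting structure by local swaps, and so does not presuppose the path-representation machinery; in effect the two proofs shift the same induction-on-laminar-structure to different places (yours hides it inside the proof of $\mathcal{C}^{le}\subseteq\mathcal{C}^{coe}$).
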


\begin{proof}

We first equip ourselves with some new terms and tools. Suppose that $C$ is such that $N\in C$ and $\mathcal{P}%
_{=1}(N)\subseteq C$. For $c,c^{\prime}\in C$, we say that \textquotedblleft%
$c$ is a child of $c^{\prime}$,\textquotedblright \ and \textquotedblleft%
$c^{\prime}$ is the mother of $c$,\textquotedblright \ if $c\subset c^{\prime}$
and there does not exist $c^{\prime \prime}\in C$ such that $c\subset
c^{\prime \prime}\subset c^{\prime}$. We recursively define the sets
$C^{1},C^{2},\cdots,C^{t}$ as follows:%
\begin{gather*}
C^{1}=\left \{  c\in C\left \vert c\text{ is not a child of any coalition
}c^{\prime}\in C\right.  \right \}  \text{,}\\
\text{and for }2\leq s\leq t\text{,}\\
C^{s}=\left \{  c\in C\left \vert c\text{ is a child of some coalition
}c^{\prime}\in C^{s-1}\right.  \right \}  \text{.}%
\end{gather*}

\bigskip

Above, $t$ is set such that $C^{t}$ is non-empty and no coalition in $C^{t}$
has a child in $C$. Note that $C^{1}=\left \{  N\right \}  $ and $C=C^{1}\cup
C^{2}\cup \cdots \cup C^{t}$.

\bigskip

For $c\in C$, let $\#(c)$ be the number of children of $c$. For $\#(c)>0$, we
label the children of $c$ as $c_{1},c_{2},\cdots,c_{\#(c)}$. To make precise
our labeling we use the following rule: among coalitions $c_{1},c_{2}%
,\cdots,c_{\#(c)}$, $c_{1}$ is the one that includes the smallest-index agent;
among coalitions $c_{2},\cdots,c_{\#(c)}$, $c_{2}$ is the one that includes
the smallest-index agent; and so on. Note that, since $\mathcal{P}%
_{=1}(N)\subseteq C$, $c=c_{1}\cup c_{2}\cup \cdots \cup c_{\#(c)} $. We are now
ready to proceed with our proof.

\bigskip

We can assume wlog that $N\in C$ and $\mathcal{P}_{=1}(N)\subseteq C$: If not,
we redefine the set $C$ as follows: $C:C\cup \left \{  N\right \}  \cup
\mathcal{P}_{=1}(N)$. Then, when we identify the sets $N_{B}^{\prime}$ and
$N_{W}^{\prime}$ such that the theorem's requirements are satisfied for
redefined $C$, clearly the theorem's requirements are also satisfied for $C$
before we redefined it. That is why the assumption that $N\in C$ and
$\mathcal{P}_{=1}(N)\subseteq C$ is innocuous.\bigskip

We prove the theorem using mathematical induction on $s$ as follows:
\bigskip

\begin{itemize}
\setlength\itemsep{1em}
\item[ ] \textbf{Base case}: There exist $N_{B}^{1}\subseteq N^{\prime}$ and
$N_{W}^{1}=N^{\prime}\smallsetminus N_{B}^{1}$ such that $\left \vert N_{B}%
^{1}\right \vert =k$, and for each $c\in C^{1}$, $\left \vert \left \vert
N_{B}^{1}\cap c\right \vert -\left \vert N_{W}^{1}\cap c\right \vert \right \vert
\leq1$.

\item[ ] \textbf{Inductive Step:}

\begin{itemize}
\item[ ] \emph{(inductive hypothesis)} For $s\in \left \{  1,2,3,\cdots
,t-1\right \}  $, let $N_{B}^{s}\subseteq N^{\prime}$ and $N_{W}^{s}=N^{\prime
}\smallsetminus N_{B}^{s}$ be such that $\left \vert N_{B}^{s}\right \vert =k$,
and for each $c\in C^{1}\cup C^{2}\cup \cdots \cup C^{s}$, $\left \vert
\left \vert N_{B}^{s}\cap c\right \vert -\left \vert N_{W}^{s}\cap c\right \vert
\right \vert \leq1$.

\item[ ] \emph{(inductive conclusion)} Then, there exist $N_{B}^{s+1}\subseteq
N^{\prime}$ and $N_{W}^{s+1}=N^{\prime}\smallsetminus N_{B}^{s+1}$ such that
$\left \vert N_{B}^{s+1}\right \vert =k$, and for each $c\in C^{1}\cup C^{2}%
\cup \cdots \cup C^{s+1}$, $\left \vert \left \vert N_{B}^{s+1}\cap c\right \vert
-\left \vert N_{W}^{s+1}\cap c\right \vert \right \vert \leq1$.
\end{itemize}
\end{itemize}

\bigskip

Note that when the proof by mathematical induction is done, the sets
$N_{B}^{\prime}=N_{B}^{t}$ and $N_{W}^{\prime}=N_{W}^{t}$ satisfy the
requirements in the theorem.

\bigskip

Showing the base case is trivial: Clearly, $C^{1}=\left \{  N\right \}  $. Then,
any two sets $N_{B}^{1}\subseteq N^{\prime}$ and $N_{W}^{1}=N^{\prime
}\smallsetminus N_{B}^{1}$, where $\left \vert N_{B}^{1}\right \vert =k$, will
be as required.

\bigskip

We now show the inductive step: Suppose the inductive hypothesis is true. Let
$N_{B}^{s+1}=N_{B}^{s}$ and $N_{W}^{s+1}=N_{W}^{s}$. If for each $c\in
C^{s+1}$, $\left \vert \left \vert N_{B}^{s+1}\cap c\right \vert -\left \vert
N_{W}^{s+1}\cap c\right \vert \right \vert \leq1$, we are done. Thus, suppose
that for some $c\in C^{s+1}$, $\left \vert \left \vert N_{B}^{s+1}\cap
c\right \vert -\left \vert N_{W}^{s+1}\cap c\right \vert \right \vert \geq2$.
Wlog, let $\left \vert N_{B}^{s+1}\cap c\right \vert -\left \vert N_{W}^{s+1}\cap
c\right \vert \geq2$. (The arguments are similar for the case when $\left \vert
N_{W}^{s+1}\cap c\right \vert -\left \vert N_{B}^{s+1}\cap c\right \vert \geq2$.)
Let $c^{\ast}\in C^{s} $ be the mother of $c$. By the inductive hypothesis,%
\[
-1\leq \left \vert N_{B}^{s+1}\cap c^{\ast}\right \vert -\left \vert N_{W}%
^{s+1}\cap c^{\ast}\right \vert =%
%TCIMACRO{\tsum \limits_{l=1}^{\#(c^{\ast})}}%
%BeginExpansion
{\textstyle \sum \limits_{l=1}^{\#(c^{\ast})}}
%EndExpansion
\left(  \left \vert N_{B}^{s+1}\cap c_{l}^{\ast}\right \vert -\left \vert
N_{W}^{s+1}\cap c_{l}^{\ast}\right \vert \right)  \leq1\text{.}%
\]

Since $c$ is a child of $c^{\ast}$ and $\left \vert N_{B}^{s+1}\cap
c\right \vert -\left \vert N_{W}^{s+1}\cap c\right \vert \geq2$, the above
inequality implies that there exists $l\in \left \{  1,\cdots,\#(c^{\ast
})\right \}  $ such that%
\[
\left \vert N_{B}^{s+1}\cap c_{l}^{\ast}\right \vert -\left \vert N_{W}^{s+1}\cap
c_{l}^{\ast}\right \vert \leq-1\text{.}%
\]

Thus, $\left \vert N_{W}^{s+1}\cap c_{l}^{\ast}\right \vert \geq \left \vert
N_{B}^{s+1}\cap c_{l}^{\ast}\right \vert +1\geq1$. From above, we also know
that $\left \vert N_{B}^{s+1}\cap c\right \vert \geq \left \vert N_{W}^{s+1}\cap
c\right \vert +2\geq2$. Thus, $N_{W}^{s+1}\cap c_{l}^{\ast}\neq \emptyset$ and
$N_{B}^{s+1}\cap c\neq \emptyset$. Let $j^{W}\in N_{W}^{s+1}\cap c_{l}^{\ast}$
and $j^{B}\in N_{B}^{s+1}\cap c$. We redefine the sets $N_{B}^{s+1}$ and
$N_{W}^{s+1}$ as follows:%
\[
N_{B}^{s+1}:N_{B}^{s+1}\smallsetminus \left \{  j^{B}\right \}  \cup \left \{
j^{W}\right \}  \text{ and }N_{W}^{s+1}:N_{W}^{s+1}\smallsetminus \left \{
j^{W}\right \}  \cup \left \{  j^{B}\right \}  \text{.}%
\]

Note that:

\begin{itemize}
\item[-] for $c_{l}^{\ast}$, after $N_{B}^{s+1}$ and $N_{W}^{s+1}$ are redefined,
the value $\left \vert \left \vert N_{B}^{s+1}\cap c_{l}^{\ast}\right \vert
-\left \vert N_{W}^{s+1}\cap c_{l}^{\ast}\right \vert \right \vert $ becomes
smaller or the same as before;

\item[-] for $c$, after $N_{B}^{s+1}$ and $N_{W}^{s+1}$ are redefined, the value
$\left \vert \left \vert N_{B}^{s+1}\cap c\right \vert -\left \vert N_{W}%
^{s+1}\cap c\right \vert \right \vert $ becomes smaller;

\item[-] for $\widetilde{c}\in C^{1}\cup C^{2}\cup \cdots \cup C^{s+1}%
\smallsetminus \left \{  c,c_{l}^{\ast}\right \}  $, after $N_{B}^{s+1} $ and
$N_{W}^{s+1}$ are redefined, the value $\left \vert \left \vert N_{B}^{s+1}%
\cap \widetilde{c}\right \vert -\left \vert N_{W}^{s+1}\cap \widetilde
{c}\right \vert \right \vert $ remains unchanged.
\end{itemize}

Obviously, the above process can be iterated and the sets $N_{B}^{s+1}$ and
$N_{W}^{s+1}$ can be redefined until the inductive conclusion is satisfied.
This concludes our proof.
\end{proof}

\bigskip
\bigskip

We next state Lemma \ref{lem:characterization}, which present a characterization of the sort of profitable deviations that may arise in a
two-resource RSG in a Nash equilibrium when $T_{1}=\left \{  1,2\right \}$. The proof of Lemma \ref{lem:characterization} is in the Appendix.

\begin{lemma}
\label{lem:characterization}
In a two-resource RSG, suppose that $T_{1}=\left \{  1,2\right \}  $ and
$T_{2}=\emptyset$. Let allocation $a$ be a Nash equilibrium such that for
resources $i$ and $i^{\prime}$, $\left \vert a_{i}\right \vert =q_{i}$ and
$\left \vert a_{i^{\prime}}\right \vert =q_{i^{\prime}}-1$. Then, for
$c\in \mathcal{P}_{\geq1}(N)$, $a$ is $c$-stable if and only if the conditions
C1, C2, and C3 below are satisfied:

\begin{itemize}
\item[ ] \textbf{C1}. if $\left \vert a_{i^{\prime}}\cap c\right \vert =0$ then
$\left \vert a_{i}\cap c\right \vert \leq1$;

\item[ ] \textbf{C2}. if $\beta_{i}=\beta_{i^{\prime}}$ and $\left \vert
a_{i^{\prime}}\cap c\right \vert >0$ then $\left \vert a_{i}\cap c\right \vert
\leq \left \vert a_{i^{\prime}}\cap c\right \vert +1$;

\item[ ] \textbf{C3}. if $\beta_{i}<\beta_{i^{\prime}}$ and $\left \vert
a_{i^{\prime}}\cap c\right \vert >0$ then $\left \vert a_{i}\cap c\right \vert
\leq \left \vert a_{i^{\prime}}\cap c\right \vert $.
\end{itemize}
\end{lemma}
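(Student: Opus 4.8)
The plan is to prove the equivalence by pinning down exactly which profitable deviations are available to $c$ and then reading off when none exists. Throughout I would write $x=|a_i\cap c|$ and $y=|a_{i'}\cap c|$, so that $|c|=x+y$, and keep in mind that both resources are type 1, so $f_i(q_i)=f_{i'}(q_{i'})=\alpha$ while $\beta_i=f_i(q_i-1)<\alpha$ and $\beta_{i'}=f_{i'}(q_{i'}-1)<\alpha$. Since the conditions C1--C3 are precisely the negation of ``a profitable deviation exists,'' characterizing the available profitable deviations in terms of $x$, $y$, and the beta values proves both directions of the iff simultaneously.

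First I would establish the crucial structural fact that a profitable deviation can only reach one of two load profiles. As the two resources hold all $n=q_i+q_{i'}-1$ agents, the post-deviation loads $L_i,L_{i'}$ satisfy $L_i+L_{i'}=q_i+q_{i'}-1$. I would show $L_i\le q_i$ and $L_{i'}\le q_{i'}$: if, say, $L_i\ge q_i+1$, then since only coalition members move, the $q_i-x$ non-members that stay on $i$ together with the movers force at least $x+1\ge 1$ coalition members onto $i$ at a cost exceeding $\alpha$, hence strictly worse off, contradicting profitability; the argument for $i'$ is symmetric. Consequently the only feasible profiles are the current one $(q_i,q_{i'}-1)$ and the ``swapped'' one $(q_i-1,q_{i'})$.

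Next I would rule out deviations that preserve the profile $(q_i,q_{i'}-1)$: costs are then unchanged, so any member who improves must move from $i$ (cost $\alpha$) to $i'$ (cost $\beta_{i'}$), but preserving the loads forces an equal number to move $i'\to i$, each worsening from $\beta_{i'}$ to $\alpha$. Hence every candidate profitable deviation reaches $(q_i-1,q_{i'})$, where $i$ costs $\beta_i$ and $i'$ costs $\alpha$; it is determined by the number $\mu$ of members moving $i\to i'$ and $\nu$ moving $i'\to i$, subject to $\mu-\nu=1$ and $\mu\le x$, $\nu\le y$. I would then split the coalition into four groups and record the cost change of each: stayers on $i$ improve $\alpha\to\beta_i$; movers $i\to i'$ are indifferent $\alpha\to\alpha$; stayers on $i'$ worsen $\beta_{i'}\to\alpha$; movers $i'\to i$ change $\beta_{i'}\to\beta_i$.

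The conditions then fall out by a short case analysis, which is where the real care lies. Weak improvement forces $\nu=y$ (no low-resource member may stay), hence $\mu=y+1$ and feasibility $x\ge y+1$. When $y=0$ this becomes $\mu=1$, and a strict gain needs a stayer on $i$, i.e. $x\ge 2$; negating gives C1. When $y>0$ I would split on the beta values: if $\beta_i>\beta_{i'}$ the $i'\to i$ movers strictly worsen, so no valid deviation exists and $a$ is automatically $c$-stable (consistent with C2, C3 holding vacuously); if $\beta_i=\beta_{i'}$ those movers are indifferent, so a strict gain needs a stayer on $i$, i.e. $x\ge y+2$, and negating yields C2; if $\beta_i<\beta_{i'}$ the $y\ge 1$ movers $i'\to i$ already supply a strict gain, so feasibility $x\ge y+1$ alone produces a profitable deviation, and negating gives C3. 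I expect the main obstacle to be bookkeeping the strict-gain requirement correctly---in particular, seeing that in the equal-beta case feasibility ($x\ge y+1$) is not enough and one needs $x\ge y+2$---together with checking that the implicitly omitted case $\beta_i>\beta_{i'}$ is genuinely subsumed, its conditions being vacuous while stability holds outright.
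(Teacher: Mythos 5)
Your proposal is correct and follows essentially the same route as the paper's proof: the same structural facts (post-deviation loads cannot exceed quotas, so the only candidate load profile is the swapped one $(q_i-1,\,q_{i'})$; any member staying on $i'$ would be strictly worse off, forcing all of $a_{i'}\cap c$ to move; then the case analysis on $\beta_i$ versus $\beta_{i'}$). The only difference is organizational: you package everything as a single exact characterization of profitable deviations via the $(\mu,\nu)$ parametrization and read off both directions at once, whereas the paper proves the two directions of the biconditional separately (explicit deviations when a condition fails, a contradiction argument when all conditions hold), but the underlying arguments coincide.
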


\begin{proof}

Let $T_{1}=\left \{  1,2\right \}  $ and $T_{2}=\emptyset$. Let $a$ be a Nash
equilibrium such that $\left \vert a_{i}\right \vert =q_{i}$ and $\left \vert
a_{i^{\prime}}\right \vert =q_{i^{\prime}}-1$. We prove the two parts of the
biconditional statement separately.\bigskip

\textbf{(only if)}

\bigskip

By way of contradiction, suppose that $a$ is $c$-stable but one of the
conditions in the lemma is not satisfied.

\bigskip

If C1 is not satisfied, then $a_{i^{\prime}}\cap c=\emptyset$ and $\left \vert
a_{i}\cap c\right \vert \geq2$. Consider an agent $j\in a_{i}\cap c$. Note that
the set $\left(  a_{i}\cap c\right)  \smallsetminus \left \{  j\right \}  $ is
non-empty. Consider the deviation $\left(  c_{1},c_{2}\right)  $ such that
$c_{i^{\prime}}=\left \{  j\right \}  $ and $c_{i}=c\smallsetminus \left \{
j\right \}  $. It is clear that $\left(  c_{1},c_{2}\right)  $ is a profitable
deviation by coalition $c$ at $a$, a contradiction. Therefore, if $a$ is
$c$-stable the condition C1 is satisfied.

\bigskip

If C2 is not satisfied, then $\beta_{i}=\beta_{i^{\prime}}$, $\left \vert
a_{i^{\prime}}\cap c\right \vert >0$, and $\left \vert a_{i}\cap c\right \vert
\geq \left \vert a_{i^{\prime}}\cap c\right \vert +2$. Let $\left \vert
a_{i^{\prime}}\cap c\right \vert =k$ and $\left \vert a_{i}\cap c\right \vert
=k+2+s$, where $k>0$ and $s\geq0$. Let $\widetilde{c}\subset a_{i}\cap c$ be
such that $\left \vert \widetilde{c}\right \vert =k+1$. Consider the deviation
$\left(  c_{1},c_{2}\right)  $ such that $c_{i^{\prime}}=\widetilde{c}$ and
$c_{i}=c\smallsetminus \widetilde{c}$. It is clear that $\left(  c_{1}%
,c_{2}\right)  $ is a profitable deviation by coalition $c$ at $a$, a
contradiction. Therefore, if $a$ is $c$-stable the condition C2 is satisfied.

\bigskip

If C3 is not satisfied, then $\beta_{i}<\beta_{i^{\prime}}$, $\left \vert
a_{i^{\prime}}\cap c\right \vert >0$, and $\left \vert a_{i}\cap c\right \vert
>\left \vert a_{i^{\prime}}\cap c\right \vert $. Let $\left \vert a_{i^{\prime}%
}\cap c\right \vert =k$ and $\left \vert a_{i}\cap c\right \vert =k+1+s$, where
$k>0$ and $s\geq0$. Let $\widetilde{c}\subseteq a_{i}\cap c$ be such that
$\left \vert \widetilde{c}\right \vert =k+1$. Consider the deviation $\left(
c_{1},c_{2}\right)  $ such that $c_{i^{\prime}}=\widetilde{c}$ and
$c_{i}=c\smallsetminus \widetilde{c}$. It is clear that $\left(  c_{1}%
,c_{2}\right)  $ is a profitable deviation by coalition $c$ at $a$, a
contradiction. Therefore, if $a$ is $c$-stable the condition C3 is satisfied.

\bigskip

Therefore, if $a$ is $c$-stable, then the conditions given in the lemma are
all satisfied.

\bigskip

\textbf{(if)}

\bigskip

By way of contradiction, suppose that for allocation $a$ the conditions C1,
C2, C3 are satisfied but $a$ is not $c$-stable. Then there exists a profitable
deviation $\left(  c_{1},c_{2}\right)  $ at allocation $a$. Let $a^{\prime
}=a\circ \left(  c_{1},c_{2}\right)  $.\bigskip

Suppose that for some resource $i^{\prime \prime}\in \left \{  i,i^{\prime
}\right \}  $, $\left \vert a_{i^{\prime \prime}}^{\prime}\right \vert
>q_{i^{\prime \prime}}$. Then $\left \vert a_{i^{\prime \prime}}^{\prime
}\right \vert >\left \vert a_{i^{\prime \prime}}\right \vert $. Since $\left \vert
a_{i^{\prime \prime}}^{\prime}\right \vert =\left \vert a_{i^{\prime \prime}%
}\right \vert -\left \vert a_{i^{\prime \prime}}\cap c\right \vert +\left \vert
c_{i^{\prime \prime}}\right \vert $, we must have $\left \vert c_{i^{\prime
\prime}}\right \vert >\left \vert a_{i^{\prime \prime}}\cap c\right \vert $. Then
$c_{i^{\prime \prime}}\neq \emptyset$. Let $j\in c_{i^{\prime \prime}}$. Since
$\left \vert a_{i^{\prime \prime}}^{\prime}\right \vert >q_{i}$, the cost that
$j$ incurs at $a^{\prime}$ is greater than $u$. But at $a$ the cost that $j$
incurs is less than or equal to $u$ (because $a$ is a Nash equilibrium; see
Proposition 1). This contradicts that $\left(  c_{1},c_{2}\right)  $ is a
profitable deviation at $a $. Thus, it must be that for each resource
$i^{\prime \prime}\in \left \{  i,i^{\prime}\right \}  $, $a_{i^{\prime \prime}%
}^{\prime}\leq q_{i^{\prime \prime}}$.\bigskip

Note that $\left \vert N\right \vert =\left \vert a_{i}\right \vert +\left \vert
a_{i^{\prime}}\right \vert =q_{i}+q_{i^{\prime}}-1$. Then the fact that for
each resource $i^{\prime \prime}\in \left \{  i,i^{\prime}\right \}  $,
$a_{i^{\prime \prime}}^{\prime}\leq q_{i^{\prime \prime}}$, implies that either
$\left \vert a_{i}^{\prime}\right \vert =q_{i}$ and $\left \vert a_{i^{\prime}%
}^{\prime}\right \vert =q_{i^{\prime}}-1$, or $\left \vert a_{i}^{\prime
}\right \vert =q_{i}-1$ and $\left \vert a_{i^{\prime}}^{\prime}\right \vert
=q_{i^{\prime}}$.

\bigskip

Suppose that $\left \vert a_{i}^{\prime}\right \vert =q_{i}$ and $\left \vert
a_{i^{\prime}}^{\prime}\right \vert =q_{i^{\prime}}-1$. Then it is clear that
the sum of the costs incurred by members of coalition $c$ is the same at
allocations $a$ and $a^{\prime}$. But then, if at $a^{\prime}$ an agent in $c$
is better off (compared to at $a$), it must be that another agent in $c$ is
worse off at $a^{\prime}$. But then $\left(  c_{1},c_{2}\right)  $ cannot be a
profitable deviation at $a$, a contradiction. Therefore, we must have
$\left \vert a_{i}^{\prime}\right \vert =q_{i}-1$ and $\left \vert a_{i^{\prime}%
}^{\prime}\right \vert =q_{i^{\prime}}$.

\bigskip

If $a_{i^{\prime}}\cap c=\emptyset$, then by C1 we get $\left \vert a_{i}\cap
c\right \vert \leq1$. Since $c$ is non-empty, we get $\left \vert a_{i}\cap
c\right \vert =1$ and $\left \vert c\right \vert =1$. But then $\left(
c_{1},c_{2}\right)  $ is a profitable deviation by a single-agent coalition,
contradicting that $a$ is a Nash equilibrium. Therefore, we obtain that
$a_{i^{\prime}}\cap c\neq \emptyset$. Let $\left \vert a_{i^{\prime}}\cap
c\right \vert =k$ where $k>0$.\bigskip

Suppose that $\beta_{i}>\beta_{i^{\prime}}$. Consider an agent $j\in
a_{i^{\prime}}\cap c$. Note that the cost that $j$ incurs at $a$ is
$\beta_{i^{\prime}}$, and the cost that $j$ incurs at $a^{\prime}$ is either
$\beta_{i}$ or $u$. Either way $j$ is worse off at allocation $a^{\prime}$,
contradicting that $\left(  c_{1},c_{2}\right)  $ is a profitable deviation at
$a$. Therefore, $\beta_{i}\leq \beta_{i^{\prime}}$.\bigskip

Suppose that $\beta_{i}=\beta_{i^{\prime}}$. Then, by C2, we find that
$\left \vert a_{i}\cap c\right \vert =s$ where $s\leq k+1$. Since $\left \vert
a_{i}^{\prime}\right \vert =q_{i}-1$, $\left \vert a_{i}\right \vert =q_{i}$, and
$\left \vert a_{i}^{\prime}\right \vert =\left \vert a_{i}\right \vert -\left \vert
a_{i}\cap c\right \vert +\left \vert a_{i}^{\prime}\cap c\right \vert $, we
obtain that $\left \vert a_{i}^{\prime}\cap c\right \vert =\left \vert a_{i}\cap
c\right \vert -1$. Then $\left \vert a_{i}^{\prime}\cap c\right \vert =s-1$. At
$a$ the agents in $a_{i^{\prime}}\cap c$ incur a cost equal to $\beta
_{i^{\prime}}<u$. Hence, at $a^{\prime}$ they cannot be assigned to resource
$i^{\prime}$ (where the cost incurred is $u$). Therefore, $a_{i^{\prime}}\cap
c\subseteq a_{i}^{\prime}\cap c$. Then $\left \vert a_{i}^{\prime}\cap
c\right \vert \geq \left \vert a_{i^{\prime}}\cap c\right \vert $. Therefore,
$s-1\geq k$. Since we also know that $s\leq k+1$, we obtain that $s=k+1$. Then
$a_{i^{\prime}}\cap c=a_{i}^{\prime}\cap c$. This means that at $a^{\prime} $,
agents in $a_{i^{\prime}}\cap c$ are assigned to resource $i$ and incur a cost
equal to $\beta_{i}$, and agents in $a_{i}\cap c$ are assigned to resource
$i^{\prime}$ and incur a cost equal to $u$. But then all agents are equally
well off at $a^{\prime}$ and $a$, contradicting that $\left(  c_{1}%
,c_{2}\right)  $ is a profitable deviation at $a$. Therefore, $\beta_{i}%
\neq \beta_{i^{\prime}}$.\bigskip

Suppose that $\beta_{i}<\beta_{i^{\prime}}$. Then, by C3, we find that
$\left \vert a_{i}\cap c\right \vert =s$ where $s\leq k$. Since $\left \vert
a_{i}^{\prime}\right \vert =q_{i}-1$, $\left \vert a_{i}\right \vert =q_{i}$, and
$\left \vert a_{i}^{\prime}\right \vert =\left \vert a_{i}\right \vert -\left \vert
a_{i}\cap c\right \vert +\left \vert a_{i}^{\prime}\cap c\right \vert $, we
obtain that $\left \vert a_{i}^{\prime}\cap c\right \vert =\left \vert a_{i}\cap
c\right \vert -1$. Then $\left \vert a_{i}^{\prime}\cap c\right \vert =s-1$. Note
that at $a$ the agents in $a_{i^{\prime}}\cap c$ incur a cost equal to
$\beta_{i^{\prime}}<u$, and hence at $a^{\prime}$ they cannot be assigned to
resource $i^{\prime}$ (where the cost incurred is $u$). Therefore,
$a_{i^{\prime}}\cap c\subseteq a_{i}^{\prime}\cap c$. Then $\left \vert
a_{i}^{\prime}\cap c\right \vert \geq \left \vert a_{i^{\prime}}\cap c\right \vert
$. Therefore, $s-1\geq k$. But this contradicts with the fact that $s\leq k$.
Therefore, $\beta_{i}<\beta_{i^{\prime}}$ cannot be true.\bigskip

Since our supposition that $\left(  c_{1},c_{2}\right)  $ is a profitable
deviation at $a$ always leads to a contradiction, we find that when the
conditions C1, C2, C3 are satisfied, the allocation $a$ is $c$-stable.\bigskip

This concludes our proof.
\end{proof}

Before presenting and proving the main result of this section, we will introduce
some new tools.

\medskip

The $\gamma$\emph{-value} of an allocation $a$ w.r.t. a coalition structure
$C$, to be denoted by $\gamma(a,C)$, is defined as follows:%
\begin{align*}
\gamma(a,C)  &  =%
%TCIMACRO{\tsum _{i\in M}}%
%BeginExpansion
{\textstyle \sum_{i\in M}}
%EndExpansion%
%TCIMACRO{\tsum \limits_{c\in C}}%
%BeginExpansion
{\textstyle \sum \limits_{c\in C}}
%EndExpansion
1(c,a_{i})\text{, where}\\
1(c,a_{i})  &  :\left \{
\begin{array}
[c]{l}%
1\text{ if }c\cap a_{i}\neq \emptyset \\
0\text{ otherwise}%
\end{array}
\right.  \text{.}%
\end{align*}

Loosely speaking, the $\gamma$\emph{-value} of allocation $a$ is a cumulative
measure of how \textquotedblleft widely\textquotedblright \ coalitions are
spread to resources at allocation $a$.

\medskip

The $\beta$\emph{-value} of an allocation $a$, to be denoted by $\beta(a)$, is
defined as follows:%
\[
\beta(a)=%
%TCIMACRO{\tsum \limits_{i\in M}}%
%BeginExpansion
{\textstyle \sum \limits_{i\in M}}
%EndExpansion
f_{i}(\left \vert a_{i}\right \vert )\text{.}%
\]

That is, the $\beta$\emph{-value} of allocation $a$ is the sum of the costs at resources at allocation $a$.

\medskip

We say that allocation $a^{\prime}$ $\gamma \beta$-dominates allocation $a$
w.r.t. $C$ if $\gamma(a^{\prime},C)>\gamma(a,C)$ or if $\gamma(a^{\prime
},C)=\gamma(a,C)$ and $\beta(a^{\prime})<\beta(a)$.

\medskip

Let $A\subseteq \mathcal{A}$ be a subset of allocations. Clearly, there exists
$a\in A$ such that, for each $a^{\prime}\in A\smallsetminus \left \{  a\right \}
$, either $a$ $\gamma \beta$-dominates $a^{\prime}$ w.r.t. $C$, or $a$ and
$a^{\prime}$ cannot be compared according to the $\gamma \beta$-domination
relation w.r.t. $C$. We refer to such an allocation $a$ as a \textquotedblleft
maximal element in $A$ according to the $\gamma \beta$-domination relation
w.r.t. $C$.\textquotedblright \ Note that there may be more than one maximal
elements in $A$.

\bigskip

We are now ready to present the main result of this section.

\begin{proof}[Proof of Theorem \ref{teo_two_yesLE}]
We show the existence of a $C$-stable allocation separately for the following
three cases:

\begin{itemize}
\item[-] Case 1: $T_{2}\neq \emptyset$.

\item[-] Case 2: $T_{1}=\left \{  1,2\right \}  $ and $\beta_{1}=\beta_{2}$.

\item[-] Case 3: $T_{1}=\left \{  1,2\right \}  $ and $\beta_{1}\neq \beta_{2}$.
\end{itemize}

Let allocation $a$ be a Nash equilibrium. (Its existence is by Theorem \ref{thm:ExistenceNash}.)

\bigskip

\textbf{Case 1:} $T_{2}\neq \emptyset$.

\bigskip

By Theorem \ref{thm:ExistenceNash}, $\left \vert T_{1}\right \vert =\left \vert T_{2}\right \vert
=1$; and $\left \vert a_{1}\right \vert =q_{1}$ and $\left \vert a_{2}\right \vert
=q_{2}$. But then, it is trivial to see that $a$ is a super strong equilibrium and
hence it is $C$-stable.

\bigskip

\textbf{Case 2:} $T_{1}=\left \{  1,2\right \}  $ and $\beta_{1}=\beta_{2}$.

\bigskip

By Theorem \ref{thm:ExistenceNash}, either $\left \vert a_{1}\right \vert =q_{1}-1$ and
$\left \vert a_{2}\right \vert =q_{2}$ or $\left \vert a_{1}\right \vert =q_{1}$
and $\left \vert a_{2}\right \vert =q_{2}-1$. Thus, $\left \vert N\right \vert
=q_{1}+q_{2}-1$.

\medskip

Let $k\geq1$ be such that $\left \vert N\right \vert =2k-1$ or $\left \vert
N\right \vert =2k$. Wlog, let $q_{1}\leq q_{2}$. Then, $q_{1}\leq k\leq q_{2} $.

\medskip

By Theorem \ref{thm:2color}, there exist $N_{B}^{\prime}\subseteq
N$ and $N_{W}^{\prime}=N\smallsetminus N_{B}^{\prime}$ such that $\left \vert
N_{B}^{\prime}\right \vert =k$ and for each $c\in C$, $\left \vert N_{B}%
^{\prime}\cap c\right \vert \leq \left \vert N_{W}^{\prime}\cap c\right \vert +1$.

\medskip

Let $\widetilde{N}_{B}\subseteq N_{B}^{\prime}$ be such that $\left \vert
\widetilde{N}_{B}\right \vert =q_{1}$. Let $\widetilde{N}_{W}=N\smallsetminus
\widetilde{N}_{B}$. Note that $\left \vert \widetilde{N}_{W}\right \vert
=q_{2}-1$. Since $\widetilde{N}_{B}\subseteq N_{B}^{\prime} $ and
$N_{W}^{\prime}\subseteq \widetilde{N}_{W}$, we obtain that for each $c\in C$,
$\left \vert \widetilde{N}_{B}\cap c\right \vert \leq \left \vert \widetilde
{N}_{W}\cap c\right \vert +1$. Therefore, for allocation $a^{\prime}$ such that
$a_{1}^{\prime}=\widetilde{N}_{B}$ and $a_{2}^{\prime}=\widetilde{N}_{W}$, the
conditions C1 and C2 in Lemma \ref{lem:characterization} are satisfied while the condition C3 is not
applicable. (To ease comparison with lemma conditions, note that $i $ and
$i^{\prime}$ in the lemma statement are 1 and 2 in here, in order.) Therefore,
by Lemma \ref{lem:characterization}, $a^{\prime}$ is $C$-stable.

\bigskip

\textbf{Case 3:} $T_{1}=\left \{  1,2\right \}  $ and $\beta_{1}\neq \beta_{2}$.

\bigskip

If $a$ is $C$-stable, we are done. If not, we proceed as follows: We show the
existence of an allocation $a^{\prime}$ such that $a^{\prime}$ is a Nash
equilibrium and $a^{\prime}$ $\gamma \beta$-dominates $a$ w.r.t. $C$. This
proves that a $C$-stable allocation exists because: If $a^{\prime}$ turns out
to be $C$-stable, we are done. Otherwise, we can iterate the same arguments:
We can find an allocation $a^{\prime \prime}$ such that $a^{\prime \prime}$ is a
Nash equilibrium and $a^{\prime \prime}$ $\gamma \beta$-dominates $a^{\prime}$
w.r.t. $C$, and so on. Since there exists a maximal element in the set of Nash
equilibria according to the $\gamma \beta$-domination relation w.r.t. $C$, our
iterations must eventually yield a $C$-stable allocation.

\medskip

Therefore, suppose that $a$ is not $C$-stable. Let $i,i^{\prime}\in \left \{
1,2\right \}  $ be such that $\left \vert a_{i}\right \vert =q_{i}$ and
$\left \vert a_{i^{\prime}}\right \vert =q_{i^{\prime}}-1$.

\medskip

By Lemma \ref{lem:characterization}, there exists $c\in C$ such that one of the conditions C1, C2, and
C3 in Lemma \ref{lem:characterization} is not satisfied. Since $\beta_{1}\neq \beta_{2}$, C2 is not
applicable. Thus, either C1 or C3 is not satisfied.

\medskip

Suppose that the condition C1 is not satisfied. Then, there exists $c\in C$
such that $\left \vert a_{i^{\prime}}\cap c\right \vert =0$ and $\left \vert
a_{i}\cap c\right \vert \geq2$. Let $j,j^{\prime}\in a_{i}\cap c$, $j\neq
j^{\prime}$. Let allocation $a^{\prime}$ be such that $a_{i}^{\prime}%
=a_{i}\smallsetminus \left \{  j^{\prime}\right \}  $ and $a_{i^{\prime}}%
^{\prime}=a_{i^{\prime}}\cup \left \{  j^{\prime}\right \}  $; hence, $j\in
a_{i}^{\prime}\cap c$, $j^{\prime}\in a_{i^{\prime}}^{\prime}\cap c$, and
$1\left(  c,a_{i}^{\prime}\right)  =1\left(  c,a_{i^{\prime}}^{\prime}\right)
=1$. By Theorem \ref{thm:ExistenceNash}, $a^{\prime}$ is a Nash equilibrium. Also, note that
$\gamma \left(  a^{\prime},C\right)  >\gamma \left(  a,C\right)  $ because:

\begin{itemize}
\setlength\itemsep{1em}
\item[-] For each $c^{\prime}\in C$ such that $c^{\prime}\cap c=\emptyset$,%
\[
1\left(  c^{\prime},a_{i}\right)  +1\left(  c^{\prime},a_{i^{\prime}}\right)
=1\left(  c^{\prime},a_{i}^{\prime}\right)  +1\left(  c^{\prime},a_{i^{\prime
}}^{\prime}\right)  \text{.}%
\]

(Because agents in $N\smallsetminus c$ are allocated to resources in exactly
the same way at allocations $a$ and $a^{\prime}$.)

\item[-] For each $c^{\prime}\in C$ such that $c\subset c^{\prime}$,%
\[
1\left(  c^{\prime},a_{i}^{\prime}\right)  +1\left(  c^{\prime},a_{i^{\prime}%
}^{\prime}\right)  \geq1\left(  c^{\prime},a_{i}\right)  +1\left(  c^{\prime
},a_{i^{\prime}}\right)  .
\]

(Because $1\left(  c,a_{i}^{\prime}\right)  +1\left(  c,a_{i^{\prime}}%
^{\prime}\right)  =2$ and hence $1\left(  c^{\prime},a_{i}^{\prime}\right)
+1\left(  c^{\prime},a_{i^{\prime}}^{\prime}\right)  =2$.)

\item[-] For $c$,%
\[
1\left(  c,a_{i}^{\prime}\right)  +1\left(  c,a_{i^{\prime}}^{\prime}\right)
>1\left(  c,a_{i}\right)  +1\left(  c,a_{i^{\prime}}\right)  \text{.}%
\]

(Because $1\left(  c,a_{i}^{\prime}\right)  +1\left(  c,a_{i^{\prime}}%
^{\prime}\right)  =2$ and $1\left(  c,a_{i}\right)  +$ $1\left(
c,a_{i^{\prime}}\right)  =1$.)

\item[-] For each $c^{\prime}\in C$ such that $c^{\prime}\subset c$,%
\[
1\left(  c^{\prime},a_{i}^{\prime}\right)  +1\left(  c^{\prime},a_{i^{\prime}%
}^{\prime}\right)  \geq1\left(  c^{\prime},a_{i}\right)  +1\left(  c^{\prime
},a_{i^{\prime}}\right)  \text{.}%
\]

(Because $1\left(  c,a_{i}\right)  +1\left(  c,a_{i^{\prime}}\right)  =1$, and
hence, $1\left(  c^{\prime},a_{i}\right)  +1\left(  c^{\prime},a_{i^{\prime}%
}\right)  =1$.)
\end{itemize}

Thus, as required, allocation $a^{\prime}$ is a Nash equilibrium and
$a^{\prime}$ $\gamma \beta$-dominates $a$ w.r.t. $C$.

\bigskip

Suppose that the condition C3 is not satisfied. Thus, $\beta_{i}%
<\beta_{i^{\prime}}$ and there exists $c\in C$ such that $\left \vert a_{i}\cap
c\right \vert >\left \vert a_{i^{\prime}}\cap c\right \vert >0$. Let $k$ be such
that $\left \vert a_{i^{\prime}}\cap c\right \vert =k-1$. Note that $\left \vert
a_{i}\cap c\right \vert \geq k\geq2$.

\medskip

For each $j\in a_{i^{\prime}}\cap c$, we define agent $\widetilde{j}$ as
follows: Let $\widetilde{c}\in C$ be such that $\widetilde{c}\subseteq c$,
$a_{i}\cap \widetilde{c}\neq \emptyset$, and there does not exist $\overline
{c}\in C$ such that $\overline{c}\subset \widetilde{c}$ and $a_{i}\cap
\overline{c}\neq \emptyset$. Let $\widetilde{j}$ be the smallest-index agent in
$a_{i}\cap \widetilde{c}$.

\medskip

Let $S_{i}=\left \{  \widetilde{j}\left \vert j\in a_{i^{\prime}}\cap
c\right \vert \right \}  $. Note that, since $\left \vert a_{i^{\prime}}\cap
c\right \vert =k-1$, $\left \vert S_{i}\right \vert \leq k-1$. Let $\overline
{S}_{i}$ be such that $S_{i}\subset \overline{S}_{i}\subseteq \left(  a_{i}\cap
c\right)  $ and $\left \vert \overline{S}_{i}\right \vert =k$. Let $\overline
{S}_{i,i^{\prime}}=\overline{S}_{i}\cup \left(  a_{i^{\prime}}\cap c\right)  $. Note that:

\begin{itemize}
\item[-] $\left \vert \overline{S}_{i,i^{\prime}}\right \vert =2k-1\geq3$.

\item[-] and for each $c^{\prime}\in C$ such that $c^{\prime}\subseteq c$ and
$1(c^{\prime},a_{i})+1(c^{\prime},a_{i^{\prime}})=2$, $\left \vert \overline
{S}_{i,i^{\prime}}\cap c^{\prime}\right \vert \geq2$.\hfill$\clubsuit$

(This is because of how we defined $\widetilde{j}$ and $\widetilde{c}$ above:
for $j\in a_{i^{\prime}}\cap c$, agent $\widetilde{j}$ is selected from within
the set $a_{i}\cap \widetilde{c}$ where $\widetilde{c}\subseteq c^{\prime}$;
thus, $\widetilde{j}\neq j$ and $j,\widetilde{j}\in \overline{S}_{i,i^{\prime}%
}\cap c^{\prime}$.)
\end{itemize}

\bigskip

We now apply Theorem \ref{thm:2color} by setting $N^{\prime
}=\overline{S}_{i,i^{\prime}}$: There exist $N_{B}^{\prime}\subseteq
\overline{S}_{i,i^{\prime}}$ and $N_{W}^{\prime}=\overline{S}_{i,i^{\prime}%
}\smallsetminus \overline{S}_{i,i^{\prime}}$ such that $\left \vert
N_{B}^{\prime}\right \vert =k$ and for each $\widetilde{c}\in C$, $\left \vert
\left \vert N_{B}^{\prime}\cap \widetilde{c}\right \vert -\left \vert
N_{W}^{\prime}\cap \widetilde{c}\right \vert \right \vert \leq1$. Let $a^{\prime
}$ be the allocation such that $a_{i}^{\prime}=\left(  a_{i}\smallsetminus
c\right)  \cup N_{W}^{\prime}$ and $a_{i^{\prime}}^{\prime}=\left(
a_{i}\smallsetminus c\right)  \cup N_{B}^{\prime}$. Clearly, at $a^{\prime}$
we have $\left \vert a_{i}^{\prime}\right \vert =q_{i}-1$ and $\left \vert
a_{i^{\prime}}^{\prime}\right \vert =q_{i^{\prime}}$. Hence, by Theorem \ref{thm:ExistenceNash},
$a^{\prime}$ is a Nash equilibrium. Note that $\gamma \left(  a^{\prime
},C\right)  \geq \gamma \left(  a,C\right)  $ because:

\begin{itemize}
\setlength\itemsep{1em}
\item[-] For each $c^{\prime}\in C$ such that $c^{\prime}\cap c=\emptyset$,%
\[
1\left(  c^{\prime},a_{i}\right)  +1\left(  c^{\prime},a_{i^{\prime}}\right)
=1\left(  c^{\prime},a_{i}^{\prime}\right)  +1\left(  c^{\prime},a_{i^{\prime
}}^{\prime}\right)  \text{.}%
\]

(Because agents in $N\smallsetminus c$ are allocated to resources in exactly
the same way at allocations $a$ and $a^{\prime}$.)

\item[-] For each $c^{\prime}\in C$ such that $c\subseteq c^{\prime}$,%
\[
1\left(  c^{\prime},a_{i}^{\prime}\right)  +1\left(  c^{\prime},a_{i^{\prime}%
}^{\prime}\right)  \geq1\left(  c^{\prime},a_{i}\right)  +1\left(  c^{\prime
},a_{i^{\prime}}\right)  \text{.}%
\]

(Because $\overline{S}_{i,i^{\prime}}\subseteq c$, $\left \vert \overline
{S}_{i,i^{\prime}}\right \vert \geq3$, and hence, by application of Theorem \ref{thm:2color} we obtain that $1\left(  c^{\prime}%
,a_{i}^{\prime}\right)  =1\left(  c^{\prime},a_{i^{\prime}}^{\prime}\right)
=1$.)

\item[-] For each $c^{\prime}\in C$ such that $c^{\prime}\subset c$ and
$\left \vert c^{\prime}\right \vert =1$,%
\[
1\left(  c^{\prime},a_{i}^{\prime}\right)  +1\left(  c^{\prime},a_{i^{\prime}%
}^{\prime}\right)  =1\left(  c^{\prime},a_{i}\right)  +1\left(  c^{\prime
},a_{i^{\prime}}\right)  =1\text{.}%
\]

(Because at any allocation a single agent is assigned to exactly one resource.)

\item[-] For each $c^{\prime}\in C$ such that $c^{\prime}\subset c$ and
$\left \vert c^{\prime}\right \vert \geq2$,%
\[
1\left(  c^{\prime},a_{i}^{\prime}\right)  +1\left(  c^{\prime},a_{i^{\prime}%
}^{\prime}\right)  \geq1\left(  c^{\prime},a_{i}\right)  +1\left(  c^{\prime
},a_{i^{\prime}}\right)  \text{.}%
\]

(Because: If $c^{\prime}\subseteq \left(  a_{i}\cap c\right)  $ or $c^{\prime
}\subseteq \left(  a_{i^{\prime}}\cap c\right)  $, we get $1\left(  c^{\prime
},a_{i}\right)  +1\left(  c^{\prime},a_{i^{\prime}}\right)  =1$ and the
desired result follows. If $1(c^{\prime},a_{i})+1(c^{\prime},a_{i^{\prime}%
})=2$, then $\left \vert \overline{S}_{i,i^{\prime}}\cap c^{\prime}\right \vert
\geq2$. (See the bullet argument above indicated with $\clubsuit$.) Hence the
desired result follows by application of Theorem \ref{thm:2color}.
\end{itemize}

Note that $\beta(a^{\prime})=\alpha +\beta_{i}$, $\beta(a)= \alpha +\beta_{i^{\prime}}$,
and since $\beta_{i}<\beta_{i^{\prime}}$, we get $\beta(a^{\prime})<\beta(a)$.
Since $\gamma \left(  a^{\prime},C\right)  \geq \gamma \left(  a,C\right)  $ and
$\beta(a^{\prime})<\beta(a)$, we obtain that $a^{\prime} $ $\gamma \beta
$-dominates $a$ w.r.t. $C$. Thus, as required, allocation $a^{\prime}$ is a
Nash equilibrium and $a^{\prime}$ $\gamma \beta$-dominates $a$ w.r.t. $C$.

\bigskip

This concludes our proof.
\end{proof}

\section*{Appendix C}

\begin{proof}[Proof of Theorem \ref{teo_two_noCOE}]
Consider an RSG where:

\begin{itemize}
\item[-] $N=\{1,\ldots,6\}$ and $M=\{1,2\}$.

\item[-] $f_{1}$ is such that $f_{1}(1)=1,f_{1}(2)=2,f_{1}(3)=4$.

\item[-] $f_{2}$ is such that $f_{2}(1)=1,f_{2}(2)=2,f_{2}(3)=3,f_{2}(4)=4$.
\end{itemize}

\medskip

Let $C=\mathcal{P}_{=1}(N)\cup\left\{  c_{1},c_{2},c_{3},c_{4},c_{5}\right\}  $ be such that the coalitions are as illustrated below.

\begin{center}
\begin{tikzpicture}
\draw[thick] (0, 0) -- (10, 0);
\fill[black] (0, 0) circle (1.5pt) node[anchor = north west]{1};
\fill[black] (2, 0) circle (1.5pt) node[anchor = north west]{2};
\fill[black] (4, 0) circle (1.5pt) node[anchor = north west]{3};
\fill[black] (6, 0) circle (1.5pt) node[anchor = north west]{4};
\fill[black] (8, 0) circle (1.5pt) node[anchor = north west]{5};
\fill[black] (10, 0) circle (1.5pt)node[anchor = north west]{6};
\draw (1, 0.65) node{$c_1$};
\draw[thick] (0, 0.20) -- (0, 0.40) -- (2, 0.40) -- (2, 0.20);
\draw (5, 0.65) node{$c_2$};
\draw[thick] (4, 0.20) -- (4, 0.40) -- (6, 0.40) -- (6, 0.20);
\draw (9, 0.65) node{$c_3$};
\draw[thick] (8, 0.20) -- (8, 0.40) -- (10, 0.40) -- (10, 0.20);
\draw (2, -0.75) node{$c_4$};
\draw[thick] (0, -0.30) -- (0, -0.50) -- (4, -0.50) -- (4, -0.30);
\draw (8, -0.75) node{$c_5$};
\draw[thick] (6, -0.30) -- (6, -0.50) -- (10, -0.50) -- (10, -0.30);
\end{tikzpicture}
\end{center}

It is clear from the figure that $C\in\mathcal{C}^{coe}$. \ In this game we
will show that no allocation is $C$-stable. By way of contradiction, suppose
that there exists a $C$-stable allocation $a$.

\medskip

Since $\mathcal{P}_{=1}(N)\subset C$, $a$ is a Nash equilibrium. Using Theorem \ref{thm:ExistenceNash}, we obtain that there
are two possibilities: $\left\vert a_{1}\right\vert =3$ and $\left\vert
a_{2}\right\vert =3$ or $\left\vert a_{1}\right\vert =2$ and $\left\vert
a_{2}\right\vert =4$. Note that at $a$, it must be that at most one agent in
$c_{1}$ is assigned to the high resource. Otherwise, at $a$ coalition $c_{1}$
has a profitable deviation: If one agent in the coalition deviates to the
other resource, the other resource now becomes high. Thus, the well-being of
the agent that deviates remains the same (it is still assigned to a high
resource) while the other agent (now assigned to a low resource) becomes
better off. The same argument applies for coalitions $c_{2}$ and $c_{3}$.

\medskip

Note that if resource $2$ is high at $a$ (i.e., $\left\vert a_{2}\right\vert
=4$), it must be that $c_{1}\subset a_{2}$ or $c_{2}\subset a_{2}$ or
$c_{3}\subset a_{2}$. We showed that this cannot be true. Therefore, at $a$
resource $1$ is high and resource $2$ is low (i.e., $\left\vert a_{1}%
\right\vert =3$ and $\left\vert a_{2}\right\vert =3$).

\medskip

Since resource $1$ is high, we cannot have $c_{1}\subset a_{1}$ or
$c_{2}\subset a_{1}$ or $c_{3}\subset a_{1}$. But then, since $\left\vert
a_{1}\right\vert =3$, it must be that $\left\vert a_{1}\cap c_{1}\right\vert
=\left\vert a_{1}\cap c_{2}\right\vert =\left\vert a_{1}\cap c_{3}\right\vert
=1$. Consider coalition $c_{2}$. If $3\in a_{1}$, we obtain that $c_{4}$ is
such that $\left\vert a_{1}\cap c_{4}\right\vert =2$ and $\left\vert a_{2}\cap
c_{4}\right\vert =1$. If $4\in a_{1}$, we obtain that $c_{5}$ is such that
$\left\vert a_{1}\cap c_{5}\right\vert =2$ and $\left\vert a_{2}\cap
c_{5}\right\vert =1$. Wlog., suppose that the former case is true. But then at
$a$ consider the following deviation for $c_{4}$: Each agent in $c_{4}$
deviates to the other resource. The deviation makes resource $1$ low and
resource $2$ high. Note that at the induced allocation the agents that
deviated to resource $2$ are equally well-off (they are still assigned to a
high resource) and the agent that deviated to resource $1$ is better off
(because now it assigned to a low resource for which the beta value is
smaller). But then this is a profitable deviation, a contradiction.\hfill
\end{proof}

\begin{proof}[Proof of Theorem \ref{teo_twoident_noCEE}]
Consider an RSG where $N=\{1,\ldots,5\}$, $M=\{1,2\}$, and the two resources
are identical. Let $C=\mathcal{P}_{=1}(N)\cup
\{\{1,2\},\{3,4\},\{5,4\},\{1,2,3,5\},\{5,2,3,4\}\}$. Consider the planar
representation illustrated below. In the figure, arrows indicate the circles:
An arrow's tail indicates the circle's center. And its length is the radius of
the circle. (To keep the figure simple, coalitions of size 1 are not
indicated.) This planar representation is in accordance with coalition
structure $C$, i.e., $C\in\mathcal{C}^{cee}$.

\begin{center}
\scalebox{0.8}{
\begin{tikzpicture}
\draw[dotted] (0, 0) -- (4, 0);
\draw[dotted] (0, 1) -- (4, 1);
\draw[dotted] (0, 2) -- (4, 2);
\draw[dotted] (0, 3) -- (4, 3);
\draw[dotted] (0, 4) -- (4, 4);
\draw[dotted] (0, 0) -- (0, 4);
\draw[dotted] (1, 0) -- (1, 4);
\draw[dotted] (2, 0) -- (2, 4);
\draw[dotted] (3, 0) -- (3, 4);
\draw[dotted] (4, 0) -- (4, 4);
\fill[black] (4, 4) circle (1.5pt) node[anchor = west]{1};
\fill[black] (4, 3) circle (1.5pt) node[anchor = west]{2};
\fill[black] (1, 3) circle (1.5pt) node[anchor = south west]{3};
\fill[black] (0, 1.5) circle (1.5pt) node[anchor = east]{4};
\fill[black] (2, 0) circle (1.5pt) node[anchor = north]{5};
\draw[very thick, ->, >=stealth] (4, 4) -- (4, 3);
%\draw (4, 4) circle (1);
\draw[very thick, ->, >=stealth] (4, 4) -- (2, 0);
%\draw (4, 4) circle (4.47);
\draw[very thick, ->, >=stealth] (1, 3) -- (0, 1.5);
%\draw (1, 3) circle (1.80);
\draw[very thick, ->, >=stealth] (2, 0) -- (4, 3);
%\draw (2, 0) circle (2.50);
\draw[very thick, ->, >=stealth] (2, 0) -- (0, 1.5);
%\draw (2, 0) circle (3.60);
\end{tikzpicture}}
\end{center}

In this game we will show that no allocation is $C$-stable. By way of
contradiction, suppose that there exists an allocation $a$ such that $a$ is
$C$-stable.

\medskip

Since $\mathcal{P}_{=1}(N)\subset C$, $a$ is a Nash equilibrium. Using Theorem \ref{thm:ExistenceNash} above, we obtain that at $a$
one resource is assigned two agents and the other one is assigned three
agents. Wlog., let $\left\vert a_{1}\right\vert =2$ and $\left\vert
a_{2}\right\vert =3$.

\medskip

Suppose that $1\in a_{1}$. Then, in $\{5,2,3,4\}$ there is one agent assigned
to resource $1$ and there are three agents assigned to resource $2$. But then
at $a$ the coalition $\{5,2,3,4\}$ has a profitable deviation: Wlog., let $a_{1}=\left\{  1,2\right\}  $ and $a_{2}=\{3,4,5\}$. At $a$, if agent $2$
deviates to resource $2$ and agents $3,4$ deviate to resource $1$, agent $5$
becomes better off and the well-beings of the remaining agents in coalition
$\{5,2,3,4\}$ do not change. This contradicts that $a$ is $C$-stable. Thus,
$1\in a_{2}$. Now suppose that $4\in a_{1}$. But then the preceding arguments
can be repeated for coalition $\{1,2,3,5\}$, leading to a contradiction. Thus,
$4\in a_{2}$. Therefore, $a_{2}$ is $\{1,2,4\}$ or $\{1,3,4\}$ or $\{1,4,5\}$.
But then we obtain that for coalitions $\left\{  1,2\right\}  $ or $\left\{
3,4\right\}  $ or $\left\{  5,4\right\}  $, at least one of them is a subset
of $a_{2}$. But then at $a$ this coalition has a profitable deviation: If one
agent in the coalition deviates to resource $1$, this agent's well-being
remains the same while the other agent in the coalition becomes better off.
This contradics that $a$ is $C$-stable. Therefore, $a$ is not $C$%
-stable.\hfill
\end{proof}

\end{document}